\newcommand{\bfR}{{\bf R}}
\newcommand{\bfE}{{\bf E}}
\newcommand{\bfN}{{\bf N}}
\newcommand{\ep}{\varepsilon}
\newcommand{\nn}{\nonumber}
\newcommand{\LOG}[1]{ \log \left( #1 \right)}
\newcommand{\SCR}[1]{{\mathscr #1}}
\newcommand{\CAL}[1]{{\cal #1}}
\newcommand{\MAT}[1]{\left(\begin{array}{cccccccccc}#1\end{array}\right)}
\newcommand{\D}[1]{{\mathscr D}( #1 )}
\newcommand{\COS}[1]{\cos{(#1)}}
\newcommand{\SIN}[1]{\sin{(#1)}}
\theoremstyle{definition}
\newtheorem{theorem}{{\bf Theorem}}[section]
\newtheorem{lemma}[theorem]{{\bf Lemma}}
\newtheorem{proposition}[theorem]{{\bf Proposition}}
\newtheorem{corollary}[theorem]{{\bf Corollary}}
\newtheorem{remark}[theorem]{{\bf Remark}}
\newcounter{Exami}
\begin{document}



\begin{flushleft}
{ \Large \bf  Klein-Gordon equations with homogeneous time-dependent
electric fields}
\end{flushleft}
\begin{flushleft}
{\large MASAKI KAWAMOTO}\\
{  Department of Mathematics, Faculty of
 Science, Tokyo University of Science, \\ Kagurazaka, Shinjuku-ku, Tokyo
 162-8601, Japan \\ E-mail:mkawa@rs.tus.ac.jp}
\end{flushleft}
\begin{abstract}
 We consider a system associated to Klein-Gordon equations with
 homogeneous time-dependent electric fields. The upper and lower boundaries of
 a time-evolution propagator for this system were proven by Veseli\'{c}
 in 1991 for electric fields that are independent of time. We extend this result to time-dependent electric fields.
\end{abstract}

\begin{flushleft}
{\em Keywards:} Klein-Gordon Equation, \  Time-Dependent Electric Fields, \
Non-Selfadjoint Operators. \\

\end{flushleft}




\section{Introduction}

We investigate the dynamics of a relativistic charged particle with charge $q \neq
0$ that moves on $\bfR^n$, $n \in \bfN$, and is influenced by homogeneous time-dependent electric fields $\bfE(t) = E(t) = (E_1(t), ... , E_n(t)
)$, which satisfy $E_j(t) \in C^{1} (\bfR )$ for all $j \in
\{ 1,...,n\}$ and
\begin{align} \label{E1}
 \sup _{t \in \bfR} 
\sum_{j=0}^n|E_j ^{(k)}(t)| < E_{0,k}, \quad E_j^{(k)} (t) =
\frac{d^k E_j (t)}{dt^k}, \quad k \in  \{0,1\}, \quad j \in \{ 1,...,n\}, 
\end{align}
 where $0< E_{0,k} < \infty $ is a constant. The wave functions under
 consideration satisfy the following Klein-Gordon
equations:   
\begin{align}\label{1} 
&\begin{cases}
&(i \partial_t + q_E )^2 \psi_0 (t,x) = L(0,p) \psi_0 (t,x), \\ 
& \psi_0(0,x) = \psi_{0,0}, \quad \left\{ 
(i \partial _t + q_E )(\psi_0(t,x))
\right\} |_{t=0} = \psi_{0,1},
\end{cases}  \\ & L(0,p) = c^2p^2 +
 (mc^2)^2, \quad q_E=q_E(t,x) = qE(t) \cdot x,
\nn
\end{align}
where $x =(x_1,...,x_n)\in \bfR
^n$, $p =(p_1,...,p_n) = -i (\partial{x_1},....,\partial{x_n})  = -i \nabla$, $m >0$, and $q \in \bfR
\backslash \{0\}$ are the
position, momentum, mass, and charge of the charged particle, respectively. We let
$c>0$ denote the speed of light; the inner product of $a,\, b \in \bfR
^n$ is denoted by $a \cdot b$.  
To introduce
the main theorem, we consider the system of Veseli\'{c} \cite{Ve}. 

Let $\psi_0(t,x)$, $\psi _{0,0}$, and $\psi_{0,1}$ be equivalent to those
 in \eqref{1}. The substitutions $\psi_{0,1} (t,x) = (i \partial _t +q_E)
 \psi_{0} (t,x)$ and 
\begin{align*}
\Psi_0(t,x) := \MAT{\psi_0(t,x) \\ \psi_{0,1} (t,x)} , \quad \Psi_{0} =
 \Psi_{0} (0,x) = \MAT{\psi_{0,0 } \\ \psi_{0,1}}
\end{align*}
yield the following (Hamilton) system: 
\begin{align} \label{M4}
i \frac{\partial}{\partial t} \Psi _0 (t,x) = A_0(t) \Psi_0(t,x), \quad 
A_0(t) := \MAT{-q_E & 1 \\ L(0,p) & -q_E}, \quad \Psi_0(0,x) = \Psi_0.
\end{align}
By substituting $\SCR{K}_{\alpha}$, defined in \eqref{Q0} (see also \cite{Ve},
 (1.3)), and by using the same scheme
as that found in \cite{Ve}, we arrive at the following system on
$\SCR{H} = L^2(\bfR ^n) \times L^2(\bfR ^n)$: 
\begin{align}\label{Q1}
i \frac{\partial}{\partial t} \Phi_{0, \alpha}(t,x) = H_{0, \alpha}(t)
 \Phi_{0, \alpha} (t,x), \quad 
\Phi_{0,\alpha}(0,x)= \Phi_{0, \alpha} \in \SCR{H},
\end{align}  
where $H_{0,\alpha}(t) = \hat{H}_0(t) + P_{0, \alpha} (t)$ with 
\begin{align} \label{Q6}
\hat{H}_{0}(t) &= \MAT{-q_E & (L(0,p))^{1/2} \\ (L(0,p))^{1/2} & -q_E} ,
 \\ 
P_{0, \alpha}(t) &= \frac{ic^2}{2} \MAT{(1-2 \alpha)qE(t) \cdot p (L(0,p))^{-1} & 0
 \\ 0 & - (1+2 \alpha)qE(t) \cdot p (L(0,p))^{-1}}, \label{Q7}
\end{align} 
where $i = \sqrt{-1}$ and 
\begin{align*}
\Phi_{0, \alpha}(t,x) = \MAT{(L(0,p))^{1/4- \alpha /2} \psi_0(t,x) \\
 (L(0,p))^{-1/4 - \alpha /2} \psi_{0,1} (t,x)}.
\end{align*}
 The construction scheme of $H_{0, \alpha} (t)$ can be found in Appendix
 A or \cite{Ve}. Here, we call $U_{0, \alpha} (t)$ the propagator for $H_{0, \alpha}
(t)$ if $U_{0, \alpha} (t)$ satisfies the following equations:
\begin{align*}
& i \partial _t U_{0, \alpha} (t) = H_{0,\alpha}(t) U_{0, \alpha}(t) ,\quad  i \partial _t
 (U_{0, \alpha} (t))^{-1} = -(U_{0, \alpha}(t))^{-1} H_{0,\alpha}(t), \\  
& U_{0, \alpha} (t) (U_{0, \alpha} (t))^{-1} = U_{0, \alpha} (0) = \mathrm{Id}_{\SCR{H}}. 
\end{align*}
The solution of
\eqref{Q1} is denoted by $\Phi_{0, \alpha} (t,x) = U_{0, \alpha}(t) \Phi_{0, \alpha}$.
The main theorem of this paper proves that $ 0< \|U_{0, 0} (t)
\|_{\SCR{B}(\SCR{H})} < \infty$ as $t \to \infty$ and that for any $\alpha \neq 0$ and $\varphi \in C_0^{\infty} ({\bf R}^n)$, 
$\left\| U_{0, \alpha} (t)  \varphi (p)\right\|_{\SCR{B}(\SCR{H})} \to 0$ or $\infty$
as $t \to \infty $, where $\SCR{B} (\SCR{H})$ is the operator
norm on $\SCR{H}$. First, we analyze the asymptotic behavior of
$U_{0, \alpha}(t)$ in $t$. Unfortunately, $U_{0, \alpha} (t)$ is difficult to control for general electric fields satisfying only \eqref{E1}. Hence,
 we impose the following additional condition {\bf (E1)} on electric fields: \\ \\ 
{\bf (E1):} Let $E(t)$ satisfy \eqref{E1}, and define $b(t) = \int_0^t
qE(s) ds$. Then $b(t)$ satisfies $\lim_{t \to \infty} |b(t)| \to \infty$.
Moreover, for any vector $a \in {\bf R}^n$, there exist constants $e_0$
and $e_1$, independent of $t$ and $a$, such that 
\begin{align}\label{14}
\int_{|a+ b(s)| \leq 2E_{0,0}/(mc^2)} |b'(s)|   ds
  \leq e_0, \quad 
\int_{0}^t \frac{|b'(s) |^2 + |b''(s)|}{c^2 (a + b(s))^2 + (mc^2)^2} ds
 \leq e_1 
\end{align} 
holds. 

Models of electric fields satisfying Assumption {\bf (E1)} and 
remarks regarding this assumption can be found in Appendix B.

We define the Fourier transform $\SCR{F}_1^{+1}$ and inverse Fourier transform
$\SCR{F}_1^{-1} $ on $L^2(\bfR ^n)$ as follows:
\begin{align} \label{13}
\begin{array}{lll}
\hat{\phi}(q) &:= \SCR{F}_1^{+1}[\phi](q), \\ 
\check{\phi}(q) &:= \SCR{F}_1^{-1} [\phi](q), 
\end{array}
\quad
\SCR{F}_1^{\pm 1} [\phi] (q) := 
(2 \pi i)^{-n/2} \int_{\bfR^n} e^{\mp i q \cdot \eta} \phi(\eta) d \eta.
\end{align}   
We now state the main theorem in this paper.
\begin{theorem} \label{T2}
Set $\alpha =0$ in \eqref{Q1} and \eqref{Q7}, and suppose Assumption {\bf
 (E1)} holds. Then for all $t \in {\bf R}$, there exist $0< \Gamma _1 < \Gamma _2$, independent of $t$, such that 
\begin{align}\label{41}
\Gamma _1 \leq \left\|
U_{0,0}(t) 
\right\|_{{\mathscr B}({\mathscr H})} \leq \Gamma _2
\end{align}
holds, where ${\mathscr B} (\mathscr H)$ is the operator norm on
 $\SCR{H}$. Conversely, for any $\alpha \neq 0$ and $\Phi_{0, \alpha}
 \in \SCR{F}_1^{-1}C_0 ({\bf R}^n) \times \SCR{F}_1^{-1}C_0({\bf R}^n)$, 
\begin{align*}
\lim_{t \to \infty} \left\| U_{0, \alpha}(t) \Phi_{0, \alpha}
 \right\|_{\SCR{H}} =
\begin{cases}
 0,  & \alpha >0, \\
\infty, & \alpha <0,
\end{cases}
\end{align*}  
holds.
\end{theorem}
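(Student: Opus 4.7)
The plan is to reduce the evolution to a family of two-dimensional ODEs parameterized by the momentum $\eta$, and to control each ODE by a WKB-type argument calibrated to Assumption \textbf{(E1)}.

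\textbf{Reduction to a pointwise-in-$\eta$ system.} First I would conjugate the evolution by the gauge transformation $V(t)=e^{-ib(t)\cdot x}\otimes I_2$, where $b(t)=\int_0^t qE(s)\,ds$. A direct calculation gives $VpV^{-1}=p+b(t)$ and $i\dot V V^{-1}=qE(t)\cdot x$, so the diagonal multiplication $-q_E$ in $H_{0,\alpha}(t)$ cancels and $L(0,p)^{1/2}$ is replaced by $\omega(t,p):=L(b(t),p)^{1/2}=(c^2(p+b(t))^2+(mc^2)^2)^{1/2}$. The conjugated Hamiltonian depends on $p$ only, so the Fourier transform $\SCR F_1$ turns the PDE into a family of $2\times 2$ ODEs indexed by $\eta\in\bfR^n$. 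The constant unitary $U=\frac{1}{\sqrt 2}\left(\begin{smallmatrix}1&1\\1&-1\end{smallmatrix}\right)$ diagonalizes the symmetric off-diagonal block, and $\Xi(t,\eta):=U^{-1}(\SCR F_1 V\Phi_{0,\alpha})(t,\eta)$ satisfies
\[
\partial_t\Xi=\begin{pmatrix}-i\omega-\alpha\dot\omega/\omega & \dot\omega/(2\omega) \\ \dot\omega/(2\omega) & i\omega-\alpha\dot\omega/\omega\end{pmatrix}\Xi.
\]

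\textbf{Eliminating $\alpha$ and isolating the oscillation.} The scalar rescaling $Z_\pm:=\omega^\alpha\Xi_\pm$ absorbs the $\alpha$-diagonal into a pointwise factor, producing the $\alpha$-independent system
\[
\partial_t Z_\pm=\mp i\omega Z_\pm+g(t,\eta)Z_\mp, \qquad g(t,\eta):=\frac{\dot\omega}{2\omega}=\frac{c^2qE(t)\cdot(\eta+b(t))}{2L(b(t),\eta)}.
\]
The phase rotation $W_\pm:=e^{\pm i\Theta}Z_\pm$ with $\Theta(t,\eta)=\int_0^t\omega(s,\eta)\,ds$ removes the dispersive diagonal, giving $\partial_t W_\pm=g\,e^{\pm 2i\Theta}W_\mp$. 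The quadratic form $|W_+|^2-|W_-|^2$ is conserved, so the transfer matrix $T(t,\eta)$ of the $Z$-system lies in $SU(1,1)$; in particular $\|T^{-1}\|=\|T\|$, so an upper bound on $\|T\|$ automatically delivers the lower bound.

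\textbf{Uniform transfer-matrix bound.} The core estimate is $\sup_{t,\eta}\|T(t,\eta)\|<\infty$. I would integrate the oscillatory coupling $ge^{\pm 2i\Theta}$ by parts using $\partial_s e^{\pm 2i\Theta}=\pm 2i\omega e^{\pm 2i\Theta}$, trading it for boundary terms of size $|g/\omega|$ and remainders involving $|\partial_s(g/\omega)|$ and $|g|^2/\omega$. A direct calculation bounds each of these pointwise in $\eta$ by
\[
C_1\frac{|b'(s)|^2+|b''(s)|}{L(b(s),\eta)}+C_2\mathbf{1}_{|\eta+b(s)|\leq 2E_{0,0}/(mc^2)}|b'(s)|,
\]
whose time-integrals are exactly the quantities that \textbf{(E1)} controls uniformly in $\eta$. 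A Gronwall argument applied to the integrated $W$-system then yields $\|T(t,\eta)\|\leq C$ with $C$ independent of $t$ and $\eta$.

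\textbf{Concluding both halves.} Unravelling the substitutions gives $\Xi(t,\eta)=(\omega(0,\eta)/\omega(t,\eta))^{\alpha}T(t,\eta)\Xi(0,\eta)$. For $\alpha=0$, Plancherel and the uniform bound on $\|T\|+\|T^{-1}\|$ immediately produce the two-sided operator-norm estimate \eqref{41} with $t$-independent constants $\Gamma_1,\Gamma_2$. For $\alpha\neq 0$ and $\hat\Phi_{0,\alpha}$ compactly supported, the assumption $|b(t)|\to\infty$ in \textbf{(E1)} forces $\omega(t,\eta)\to\infty$ for each fixed $\eta$, so the pointwise factor $(\omega(0,\eta)/\omega(t,\eta))^{2\alpha}$ tends to $0$ when $\alpha>0$ and to $\infty$ when $\alpha<0$; dominated convergence and Fatou's lemma respectively deliver the claimed limits. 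The main difficulty will be the uniformity in $\eta$ in the transfer-matrix estimate: near $\eta+b(s)=0$ the phase $\Theta$ has a near-stationary point and $\omega(s,\eta)$ is only bounded below by $mc^2$, so the integration-by-parts bound leaves a residue that the first, small-denominator integral in \textbf{(E1)} is tailored to absorb.
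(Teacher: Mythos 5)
Your strategy is sound and lands on the same two estimates as the paper, but it is packaged genuinely differently. The paper does not stay with the first-order $2\times 2$ system: after the same gauge transformation $e^{-ib(t)\cdot x}$ and Fourier transform it goes back to the scalar second-order equation $\zeta''+L(t,\xi)\zeta=0$ and writes the fundamental solutions in Hochstadt--Pr\"ufer form $\zeta_0=A\cos B$, $\zeta_1=C\sin D$, so that the amplitudes are \emph{explicit} exponentials, e.g. $A(t,\xi)=\exp\bigl(-\int_0^t (Q'/Q)\sin^2 B\,ds\bigr)$ with $Q=L^{1/2}$. Bounding $|A|$ and $|C|$ then reduces to bounding $\int_0^t (Q'/Q)\cos(2B)\,ds$, which is done exactly as you propose: split off the set $|\xi+b(s)|\le 2E_{0,0}/(mc^2)$ (first integral in \textbf{(E1)}) and integrate by parts against the phase elsewhere (second integral in \textbf{(E1)}). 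Your $SU(1,1)$ observation ($\|T^{-1}\|=\|T\|$, so the lower bound is free) is the transfer-matrix form of the paper's Wronskian identity $\zeta_0\zeta_1'-\zeta_0'\zeta_1=1$, i.e. $\SCR{G}_0\SCR{G}_1\cos(B-D)=1$, which is what produces the paper's lower bound. Your treatment of $\alpha\neq 0$ is identical in substance: the factor $(\omega(0,\eta)/\omega(t,\eta))^{\alpha}$ is the paper's $\SCR{L}_{\alpha}(t,\xi)=L(t,\xi)^{-\alpha/2}L(0,\xi)^{\alpha/2}$, sent to $0$ or $\infty$ because $|b(t)|\to\infty$.

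There is, however, one step that does not close as written: the Gronwall argument. Integrating $\int_0^t g\,e^{\pm 2i\Theta}W_{\mp}\,ds$ by parts produces the boundary term $\frac{g(t)}{2i\omega(t)}e^{\pm 2i\Theta(t)}W_{\mp}(t)$, i.e. a multiple of the \emph{unknown at time $t$} with coefficient $|g/(2\omega)|=|\partial_t\omega|/(4\omega^2)\le cE_{0,0}/(4(mc^2)^2)$. Assumption \textbf{(E1)} imposes no smallness on $E_{0,0}$, so this coefficient need not be $<1$ and cannot in general be absorbed into the left-hand side; restricting the integration by parts to $|\eta+b(s)|>2E_{0,0}/(mc^2)$ does not help, since at the boundary of that region the coefficient is of order $m^2c^3/E_{0,0}$, again with no smallness, and iterating the substitution only produces a geometric series in this same quantity. (Your diagnosis that the danger lies in a near-stationary phase is also slightly off: $\Theta'=\omega\ge mc^2$ is bounded below everywhere, so the phase never degenerates; the real obstruction is the self-referential boundary term.) The Hochstadt representation is precisely what removes this problem: the quantities produced by the paper's integration by parts involve only $\sin(2B)$ and $\cos(2B)$, which are bounded by $1$ independently of the unknown, so the bound on the exponent of $A$ is unconditional and no Gronwall step is needed. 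To repair your argument you would either need a smallness hypothesis on $E_{0,0}/(m^2c^3)$, or you would pass to amplitude--phase variables for $W_{\pm}$ --- at which point you have reconstructed the paper's proof.
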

Herein, we say $U_{0,0}(t)$ is stable on $\SCR{H}$ and $U_{0, \alpha}(t)$, 
$\alpha \neq 0$ is unstable on $\SCR{H}$. As a corollary to Theorem \ref{T2}, we obtain the
following inequality.

\begin{corollary}\label{Q8}
Suppose Assumption {\bf (E1)} holds. Then for all $t \in {\bf R}$, ${\psi}_{0,0} \in
 H^{1/2} ({\bf R}^n)$, and ${\psi}_{0,1}
 \in H^{-1/2} ({\bf R}^n)$, there exist $0< \gamma _1< \gamma _2$ independent
 of $t$ such that 
\begin{align*} \nn 
& \gamma _1 
\left( 
 \left\| (L(0,p))^{1/4} \psi_{0,0} \right\|_{L^2(\bfR ^n)}^2 +
 \left\| 
(L(0,p))^{-1/4} \psi_{0,1}
\right\|_{L^2(\bfR ^n)}^2
\right) 
\\  & \qquad < \left\| (L(0,p))^{1/4} \psi_0 (t,x)\right\|_{L^2(\bfR ^n)}^2 +
 \left\| 
(L(0,p))^{-1/4} (i \partial _t + q_E)\psi_0(t,x)
\right\|_{L^2(\bfR ^n)}^2 \\ & \qquad \qquad  
< \gamma _2 \left( 
 \left\| (L(0,p))^{1/4} \psi_{0,0}\right\|_{L^2(\bfR ^n)}^2 +
 \left\| 
(L(0,p))^{-1/4} \psi_{0,1}
\right\|_{L^2(\bfR ^n)}^2
\right) \nn
\end{align*} 
holds, where $\psi_0(t,x)$, $\psi_{0,0}$, and $\psi_{0,1}$ are the same
 as those in \eqref{1}.
\end{corollary}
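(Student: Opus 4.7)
The plan is to translate the corollary into a two-sided bound on $\|U_{0,0}(t)\Phi_{0,0}\|_{\SCR H}$ and then invoke Theorem~\ref{T2}. Setting $\alpha=0$ in the definition of $\Phi_{0,\alpha}(t,x)$ given immediately before \eqref{Q6}, and using $\psi_{0,1}(t,x)=(i\partial_t+q_E)\psi_0(t,x)$, one has
$$\Phi_{0,0}(t,x) = \MAT{(L(0,p))^{1/4}\psi_0(t,x) \\ (L(0,p))^{-1/4}(i\partial_t+q_E)\psi_0(t,x)},$$
so that the middle expression of the corollary equals $\|\Phi_{0,0}(t,x)\|_{\SCR H}^{2}$, while the outer expressions equal $\|\Phi_{0,0}\|_{\SCR H}^{2}$. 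Since $\Phi_{0,0}(t,x)=U_{0,0}(t)\Phi_{0,0}$, the upper bound of Theorem~\ref{T2} at once yields $\|\Phi_{0,0}(t,x)\|_{\SCR H}\leq \Gamma_{2}\|\Phi_{0,0}\|_{\SCR H}$, giving the right-hand estimate with $\gamma_{2}=\Gamma_{2}^{2}$.

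For the left-hand estimate, one needs a uniform-in-$t$ upper bound on $\|U_{0,0}(t)^{-1}\|_{\SCR B(\SCR H)}$, because invertibility of the propagator gives $\Phi_{0,0}=U_{0,0}(t)^{-1}\Phi_{0,0}(t,x)$ and hence
$$\|\Phi_{0,0}\|_{\SCR H} \leq \|U_{0,0}(t)^{-1}\|_{\SCR B(\SCR H)}\,\|U_{0,0}(t)\Phi_{0,0}\|_{\SCR H}.$$
I would obtain such a bound by applying Theorem~\ref{T2} to the time-reversed Klein--Gordon system: the backward propagator $U_{0,0}(t)^{-1}$ coincides with the forward propagator of the same equation driven by the shifted-and-reflected field $\tilde{E}(s)=E(t-s)$, and both integral conditions in \eqref{14} are preserved under such reflection and shift with the same constants $e_{0},e_{1}$. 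Accepting that, one gets $\|U_{0,0}(t)^{-1}\|_{\SCR B(\SCR H)} \leq \Gamma_{2}$ uniformly in $t\in\bfR$, and consequently $\gamma_{1}=\Gamma_{2}^{-2}$ works.

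The principal obstacle is verifying that Assumption~\textbf{(E1)}, in particular the condition $|b(t)|\to\infty$ as $|t|\to\infty$ and the two bounds in \eqref{14}, is genuinely invariant under the time-reflection and shift used to identify $U_{0,0}(t)^{-1}$ with a forward propagator covered by Theorem~\ref{T2}. The change of variable $s\mapsto t-s$ sends $b(s)$ into an affine transform of $b$, so the indicator domain and the weight $c^{2}(a+b(s))^{2}+(mc^{2})^{2}$ in the second integral are simply relabeled, which should give the invariance with \emph{the same} constants $e_{0},e_{1}$ independent of $t$. Once that is confirmed, the corollary is an essentially tautological consequence of the operator-norm bound~\eqref{41}.
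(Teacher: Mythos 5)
Your reduction of the corollary is exactly right and is what the paper intends: with $\alpha=0$ the vector $\Phi_{0,0}(t,x)=U_{0,0}(t)\Phi_{0,0}$ has components $(L(0,p))^{1/4}\psi_0(t,x)$ and $(L(0,p))^{-1/4}(i\partial_t+q_E)\psi_0(t,x)$, so the corollary is the statement $\gamma_1\|\Phi_{0,0}\|^2_{\SCR{H}}\le\|U_{0,0}(t)\Phi_{0,0}\|^2_{\SCR{H}}\le\gamma_2\|\Phi_{0,0}\|^2_{\SCR{H}}$. You are also right to observe that the literal statement of Theorem \ref{T2} is too weak for the lower bound: a lower bound on the operator norm $\|U_{0,0}(t)\|_{\SCR{B}(\SCR{H})}$ says nothing about $\|U_{0,0}(t)\Phi\|$ for an individual $\Phi$. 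That is a sharp and correct reading of the theorem as stated.

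Where you diverge from the paper is in how you repair this, and your route has a real gap. You propose to bound $\|U_{0,0}(t)^{-1}\|_{\SCR{B}(\SCR{H})}$ by identifying the backward propagator with the forward propagator of a time-reflected field, but this identification is asserted rather than proved, and it is not a mere relabelling: $W(s)=U_{0,0}(t-s)U_{0,0}(t)^{-1}$ solves $i\partial_sW=-H_{0,0}(t-s)W$, and the minus sign flips the off-diagonal $(L(0,p))^{1/2}$ blocks of $\hat H_0$, so one must also conjugate by $\mathrm{diag}(1,-1)$ and replace $E$ by $-E(t-\cdot)$ before the system is again of the form covered by Theorem \ref{T2}; one must then check \eqref{14} for $\tilde b(s)=b(t-s)-b(t)$ uniformly in the shift $t$. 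None of this is carried out. The detour is also unnecessary: the paper's own proof of the stability half of Theorem \ref{T2} (\S 4.2) already establishes the pointwise two-sided bound you need, namely $\|U_{0,0}(t)\Phi_{0,0}\|^2_{\SCR{H}}\ge(1-\sqrt{1-\delta^2})\min\{e^{-2C_0},e^{-2C_1}\}\|\Phi_{0,0}\|^2_{\SCR{H}}$ for every $\Phi_{0,0}$, via the factorization \eqref{AV} together with the Wronskian identity $\zeta_0\zeta_1'-\zeta_0'\zeta_1=1$, which gives $\SCR{G}_0\SCR{G}_1\cos(B-D)=1$ and hence $|\sin(B-D)|\le\sqrt{1-\delta^2}<1$ uniformly in $t$ and $\xi$. (Equivalently, that same identity shows the $2\times2$ symbol matrix in \eqref{AV} has determinant $i$ and uniformly bounded entries, so its inverse is uniformly bounded, which bounds $U_{0,0}(t)^{-1}$ directly with no time reversal.) Citing that displayed inequality closes your argument immediately; as written, your proof of the left-hand estimate is incomplete.
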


If $q_E$ is independent of time and satisfies $$ \left\| q_E u
 \right\|_{L^2(\bfR ^n)} \leq  \delta \left\|   (L(0,p))^{1/2} u
 \right\| _{L^2(\bfR ^n)}, \quad 0< \delta <1, $$ then 
 Najman \cite{Na} showed that $A_0 (t)$ generates a uniformly bounded
 propagator on $\SCR{K}_{\alpha}$. Veseli\'{c} subsequently applied Najman's scheme
 to non-decreasing constant electric fields $qE \cdot x$ and obtained
 stability in the time-evolution operator on $\SCR{K}_{0}$ and
 instability on $\SCR{K}_{\alpha}$, where $\alpha \neq 0$. Essential to the proof is the factorization of propagator $U_{0,
 \alpha }(t) = V_{\alpha} e^{-itqE \cdot x} V_{\alpha} ^{-1} $, where
 $V_{\alpha}$ is a time-independent linear operator satisfying 
 differential equations (see \cite{Ve}). By virtue of this factorization,  $\left\|U_{0, \alpha} (t) \Phi \right\|$ can be estimated by analyzing $V_{\alpha}$
 instead of $e^{-itH_{0, \alpha}}$. We try to extend this approach to
 time-dependent electric fields. First, we form 
 another factorization of $U_{0, \alpha} (t)$ since the aforementioned $V_{\alpha}$ depends on time if the electric fields depend on 
 time (i.e., $V_{\alpha}(t) e^{ib(t) \cdot x} V_{\alpha}
 (t)^{-1}$ is different from $U_{0, \alpha} (t)$). In order to form a
 new factorization, we focus on the so-called {\em Avron-Herbst formula}. We refer to Avron-Herbst \cite{AH} and Cycon-Froese-Kirsch-Simon
 \cite{CFKS}, Theorem 7.1., which consider the study of the Schr\"{o}dinger equations with time-dependent (and constant) electric fields:
\begin{align}\label{st}
\begin{cases}
i \partial_t u(t,x) &= ( p^2 /(2m) -qE(t) \cdot x) u(t,x) =:
 H_0^{\mathrm S} (t) u(t,x) ,\\
u(0,x) &= u_0,
\end{cases}
\end{align}
where $H_0^{\mathrm S} (t)$ is the {\em Stark Hamiltonian}. For a solution $u(t,x)$ to \eqref{st}, substituting $u(t,x) = e^{ib(t) \cdot x} v(t,x) $ yields $ i \partial _t v(t,x) = (p+ b(t))^2v(t,x)/(2m)$. Thus, by letting 
$$ 
v(t,x) = e^{-ia(t)}e^{-ic(t) \cdot p}e^{-itp^2/(2m)}u_0 , \quad a(t) = \int_0^t \frac{b(s)^2}{2m} ds , \quad c(t) = \int _0^t \frac{b(s)}{m} ds,
$$ 
one obtains $u(t,x) = e^{-ia(t)} e^{ib(t) \cdot x} e^{-ic(t) \cdot p}
 e^{-it p^2/(2m)}u_0$, i.e., a propagator for $H_0^S(t)$ can be
 described by $ e^{-ia(t)} e^{ib(t) \cdot x} e^{-ic(t) \cdot p} e^{-it
 p^2/(2m)}$. This factorization of the propagator is called the
 Avron-Herbst formula. This factorization has been applied to many research areas such as quantum scattering
theory and non-linear analysis (see Adachi-Ishida
 \cite{AI}, Avron-Herbst \cite{AH}, M\o ller \cite{Mo}, and
 Carles-Nakamura \cite{CN}). We attempt to apply this scheme to 
 \eqref{1}; in this process, we analyze the differential equation $-\partial_t^2 u_0 (t,x) = ({c^2(p+b(t))^2 + (mc^2)^2}) u_0 (t,x)$. 
 To consider the asymptotic behavior of solutions to this equation, we
 use the approach of Hochstadt \cite{Ho1}. At the conclusion of this paper
 (\S{4.1} \eqref{AV}), we obtain a new factorization of the
 propagator $U_{0,\alpha} (t)$.  

Our first approach to prove Theorem \ref{T2} is to reduce 
\eqref{1} to the ordinary differential equation in \eqref{5} through
the Fourier transform. A
similar approach to the case where the potential $q_E$ is
 dependent on time but independent of $x$, was studied by B\"{o}hme-Ressig \cite{BR1}, \cite{BR2}. Time-decaying dissipative wave equations were
studied by Wirth \cite{W1}, \cite{W2}. Our approach may be applicable
to such equations and other open problems such as those discussed by
Todorova-Yordanov \cite{TY}.

\section{Definitions and notation}

In this section, we introduce definitions and notation. Let $C$ be a constant where $C>0$. For $h(\tau, \eta)$,
$\tau \in \bfR$, and $\eta \in \bfR ^n$, let $h'(\tau, \eta)$ and $h''(\tau, \eta)$ be defined by  
$$ 
h'(\tau , \eta) := h^{(1)} (\tau, \eta), \quad h''(\tau, \eta) :=
h^{(2)} (\tau , \eta), \quad h^{(l)} (\tau , \eta) := \frac{\partial^l
h}{\partial \tau ^l} (\tau, \eta), 
$$ 
for $l \in \{0,1,2\}$. Moreover, let $\SCR{H}=L^2(\bfR
^n) \times L^2(\bfR ^n)$.  For $ \Phi =  (\phi_1 ,
\phi _2)^{\mathrm{T}} $ and $\Psi = {(\psi _1,  \psi _2 )^{\mathrm{T}}}$, the norm of the Hilbert space $\SCR{H}$ is defined by
 $\left\| \Phi
\right\|_{\SCR{H}}^2 = \left\| \phi_1 \right\|_{L^2(\bfR ^n)}^2 + \left\| 
\phi _2
\right\|_{L^2(\bfR ^n)}^2$ and inner product of $\SCR{H}$ is defined by 
\begin{align*}
\left( \Phi , \Psi \right)_{\SCR{H}} :
= (\phi_1, \psi _1)_{L^2(\bfR ^n)}
 + ( \phi_2, \psi _2)_{L^2(\bfR ^n)}.
\end{align*}
Let $A$, $B$, $C$, and $D$ be linear
 operators on $L^2(\bfR ^n)$, and let 
\begin{align*}
\CAL{A} = \MAT{A & B \\ C & D}.
\end{align*}
Then for $\Phi = (\phi_1,
  \phi_2)^{\mathrm{T}}$, we define  
\begin{align*}
\CAL{A} \Phi = \MAT{A & B \\ C  & D} \MAT{\phi _1 \\ \phi_2 } := \MAT{A \phi_1 +B \phi_2
 \\ C \phi_1 + D \phi_2}, 
\end{align*}
so that $\CAL{A}$ is a linear operator on $\SCR{H}$. Furthermore, for $\Phi _1 \in \SCR{H}$ and $\Phi _2 \in \SCR{H}$, if
there exists  $\Phi_3 \in \SCR{H}$ such that  
\begin{align*}
(\CAL{A} \Phi_1 , \Phi_2)_{\SCR{H}} = (\Phi_1, \Phi_3)_{\SCR{H}}
\end{align*}
holds, then we define $\Phi _3 = \CAL{A}^{\ast} \Phi_2$; it can be
easily obtained by
\begin{align*}
\CAL{A} ^{\ast} = \MAT{A & B \\ C & D} ^{\ast} = \MAT{A^{\ast} & C^{\ast} \\ B^{\ast} &
 D^{\ast}},
\end{align*}
where $A^{\ast}$, $B^{\ast}$, $C^{\ast}$, and $D^{\ast}$ are the
adjoint operators of $A$, $B$, $C$, and $D$, respectively, on $L^2(\bfR
^n)$. 
Finally, $(A)_M \Phi$ for $\Phi \in
\SCR{H}$ means
\begin{align} \label{X10}
 (A)_M \Phi = \MAT{A & 0 \\ 0 & A} \Phi
\end{align}
 for some linear operator $A$ on $L^2(\bfR ^n)$.

\section{Estimates of solutions to \eqref{1} }

First, we define $b(t) = (b_1
(t), b_2(t), ...,b_n(t))$ as
$\int_0^t qE(s) ds$ and take 
$$ u_0 (t,x) = e^{-i \int_0^t qE(s) \cdot x ds
} \psi_0(t,x)= e^{ -ib(t) \cdot x} \psi _0(t,x). $$ Then $u_0
(t,x)$ satisfies equations
\begin{align} \label{L}
& \begin{cases}
& - \partial _t ^2 u_0 (t,x) = L(t,p) u_0 (t,x), \\ 
& u_0 (0,x) = \psi_{0,0}, \quad (i \partial _t u_0 )(0,x) = \psi
 _{0,1}, 
\end{cases} 
\\ \nn
& L(t,p) := c^2 (p + b(t))^2 + (mc^2)^2, 
\end{align}
where $e^{-ib(t) \cdot  x} (i \partial _t) e^{ib(t) \cdot x} = 
i \partial _t - qE(t) \cdot x$ and $e^{-ib(t) \cdot  x} p e^{ib(t) \cdot x} = 
p +b(t)$ hold on the test function.
By the Fourier transform in \eqref{13}, \eqref{L} is transformed into 
\begin{align} \label{z1}
(\partial _t ^2 \hat{u}_0)(t,\xi) + L(t, \xi)\hat{u}_0 (t, \xi) = 0,
 \quad 
\hat{u}_0(0,\xi) = \hat{\psi}_{0,0}, \quad 
(i\partial _t \hat{u}_0) (0, \xi) = \hat{\psi}_{0,1}.
\end{align} 
Hence, for any fixed $\xi \in {\bf R}^n$, define $\zeta _j (t, \xi)$, $j \in \{0,1\}$, as the solution to
\begin{align}
\zeta _j ''(t, \xi) + L(t,\xi) \zeta _j (t, \xi) = 0, \quad 
\begin{cases}
\zeta _0 (0, \xi)=1, \\
\zeta' _0 (0, \xi)= 0,
\end{cases}
 \quad 
\begin{cases}
\zeta _1 (0, \xi)=0, \\
\zeta' _1 (0, \xi)= 1.
\end{cases}
\label{5}
\end{align}
Note that the solutions of \eqref{1} can be written as  
\begin{align} \label{L4}
\psi _0 (t,x) = e^{ib(t) \cdot x} \SCR{F}^{-1}_1 \zeta _0 (t, \xi)
 \hat{\psi}_{0,0} + e^{ib(t)\cdot x} \SCR{F}^{-1}_1 \zeta _1 (t, \xi)
 \hat{\psi}_{0,1}. 
\end{align}

\subsection{Hochstadt type solutions}

Let $\psi_0 (t,x)$ be
a solution to the Klein-Gordon equations in \eqref{1}. Noting \eqref{L4}, it
is equivalent to analyze the asymptotic behavior of the solution to
\eqref{5} and analyze the asymptotic behavior of the solution to \eqref{1}. To analyze \eqref{5}, we consider the approach of 
Hochstadt \cite{Ho1} (also, see Hochstadt \cite{Ho2}). For simplicity, we
denote 
\begin{align}
L(t,\xi)^{1/2} = Q(t,\xi), \quad L(t,p)^{1/2} = Q(t,p) \label{Q20}
\end{align}
in the following. Suppose that $\zeta _j
(t, \xi)$ and $\zeta _j '(t, \xi)$ are represented by   
\begin{align}   \label{L5}
& \begin{cases}
&\zeta _0 (t,\xi) = A(t,\xi) \COS{B(t,\xi)} , \ \zeta_0' (t,\xi) =
 -A(t,\xi)Q(t,\xi) \SIN{B(t, \xi)},  \\
\ &A(0, \xi)=1 , \ B(0, \xi)=0, 
\end{cases}
\\ 
& \begin{cases}
&\zeta _1 (t, \xi) = C(t,\xi) \SIN{D(t, \xi)} , \ \zeta_1' (t, \xi)
 = C(t, \xi)Q(t, \xi) \COS{D(t, \xi)} 
, \\ &C(0, \xi) = Q(0, \xi)^{-1} , \ D(0, \xi)=0, 
\end{cases}
\label{L6}
\end{align}
respectively, for functions $A(t, \xi)$, $B(t, \xi)$, $C(t, \xi)$,
and $D(t, \xi)$. Considering \eqref{5}, \eqref{L5}, and \eqref{L6}, we obtain differential equations
\begin{align} \label{6}
&\begin{cases}
&A'(t,\xi) = - Q(t,\xi)^{-1}Q'(t,\xi) \sin ^2 ({B(t,\xi)}) A(t,\xi), \\
&B'(t,\xi) = Q(t,\xi) -  Q(t,\xi)^{-1}Q'(t,\xi) \SIN{B(t,\xi)}\COS{B(t,\xi)},
\end{cases} 
\\ \label{7}
&\begin{cases}
&C'(t,\xi) = - Q(t,\xi)^{-1}Q'(t,\xi) \cos ^2 ({D(t,\xi)} ) C(t,\xi), \\
&D'(t,\xi) = Q(t,\xi) +  Q(t,\xi)^{-1}Q'(t,\xi) \SIN{D(t,\xi)}\COS{D(t,\xi)}.
\end{cases} 
\end{align}
Equations \eqref{6} and \eqref{7} yield 
\begin{align}\nonumber
A(t,\xi) &= e^{- \int_0^t Q(s,\xi)^{-1}Q'(s,\xi) \sin ^2 ({B(s,\xi)}) ds  }, \\ 
C(t,\xi) &= Q(0 , \xi)^{-1} e^{- \int_0^t Q(s,\xi)^{-1}Q'(s,\xi)
 \cos ^2 ({D(s,\xi)})  ds},  \label{8}
\end{align}
and 
\begin{align} \nonumber
B(t,\xi) &= \int_0^t Q(s,\xi) -  Q(s,\xi)^{-1}Q'(s,\xi) \SIN{B(s,\xi)}\COS{B(s,\xi)} ds, \\
D(t,\xi) &= \int_0^t Q(s,\xi) +  Q(s, \xi)^{-1}Q'(s,\xi)
 \SIN{D(s,\xi)}\COS{D(s,\xi)} ds.  \label{K3}
\end{align}

\begin{lemma}
Functions $B(t) = B(t, \cdot)$ and $D(t) = D(t, \cdot)$ ($A(t, \cdot)$ and $C(t,
 \cdot)$) are in $C^{2} ({\bf R})$. Moreover, $B(t)$ and $D(t)$
 satisfying the integral equation \eqref{K3} are unique. 
\end{lemma}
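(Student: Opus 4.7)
The plan is to recognize that \eqref{K3} is a pair of Volterra integral equations of the second kind with a globally Lipschitz nonlinearity, and then extract $C^2$ regularity by bootstrapping from the integral form. I will focus on $B(t,\xi)$; the argument for $D(t,\xi)$ is identical up to the sign of the nonlinear term.

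First I would record the smoothness of the coefficients. Because $E_j\in C^1(\bfR)$, the antiderivative $b(t)=\int_0^t qE(s)\,ds$ lies in $C^2(\bfR)$. Since $L(t,\xi)=c^2(\xi+b(t))^2+(mc^2)^2\ge (mc^2)^2$ stays bounded away from $0$, the function $Q(t,\xi)=L(t,\xi)^{1/2}$ belongs to $C^2(\bfR_t)$, with derivative $Q'(t,\xi)=c^2(\xi+b(t))\cdot b'(t)/Q(t,\xi)$ in $C^1(\bfR_t)$. Consequently $Q$, $Q^{-1}$ and $Q^{-1}Q'$ are all locally bounded and $C^1$ on every compact subinterval of $\bfR$.

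Next I would set up a fixed-point argument for $B$. For each $T>0$, define the operator
\[
\mathcal{T}[B](t):=\int_0^t\bigl(Q(s,\xi)-Q(s,\xi)^{-1}Q'(s,\xi)\SIN{B(s,\xi)}\COS{B(s,\xi)}\bigr)\,ds
\]
on $C([-T,T])$. The nonlinear factor $\sin B\cos B=\tfrac12\sin(2B)$ is globally $1$-Lipschitz in $B$, and $\sup_{|s|\le T}|Q^{-1}Q'(s,\xi)|\le M_T<\infty$, so $\mathcal{T}$ satisfies a Volterra-type Lipschitz estimate $|\mathcal{T}[B_1](t)-\mathcal{T}[B_2](t)|\le M_T\int_0^{|t|}|B_1-B_2|(s)\,ds$. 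Picard iteration in a suitably weighted sup-norm (or equivalently the classical contraction argument for Volterra equations) yields a unique fixed point on $[-T,T]$; since $|\mathcal{T}[B](t)|\le\int_0^{|t|}(|Q|+M_T/2)\,ds$ a priori, no blow-up can occur, and patching across $T\to\infty$ gives a unique global continuous solution $B(\cdot,\xi)$. The same reasoning produces $D(\cdot,\xi)$.

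Finally I would bootstrap regularity. Because the integrand defining $\mathcal{T}[B]$ is continuous in $s$ once $B$ is continuous, the fundamental theorem of calculus gives $B\in C^1(\bfR)$ with $B'$ equal to that integrand; substituting $B\in C^1$ and $Q,Q^{-1}Q'\in C^1$ into the right-hand side of \eqref{6} shows $B'\in C^1$, i.e.\ $B\in C^2(\bfR)$. Then the explicit exponential formulas in \eqref{8} express $A$ and $C$ as antiderivatives composed with the $C^1$ exponential, so $A,C\in C^2(\bfR)$ as well. The only mildly delicate point is ensuring the Lipschitz constant and the a priori bound are uniform on compact $t$-intervals so that global existence follows without any smallness hypothesis on $Q'$; this is handled by the boundedness of $\sin\cos$ noted above, and is the place where the nonlinearity's special form (as opposed to an arbitrary $C^1$ function of $B$) is genuinely used.
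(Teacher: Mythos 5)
Your proof is correct and follows essentially the same route as the paper: both arguments treat \eqref{K3} as a Volterra equation whose nonlinearity $\tfrac12\sin(2B)$ is $1$-Lipschitz and whose coefficient $Q^{-1}Q'$ is uniformly bounded (by $E_{0,0}/(mc)$, via \eqref{t2}), and both conclude by a contraction-type iteration --- the paper by stepping through intervals of length $\varepsilon_E < mc/E_{0,0}$ on which the sup-norm estimate is already a strict contraction, you by the equivalent weighted-norm Picard argument. The only substantive difference is that you also extract existence from the fixed point and spell out the $C^2$ bootstrap for $B$, $D$, $A$, $C$, both of which the paper leaves implicit (it asserts the regularity is obvious and proves only uniqueness), so your write-up is, if anything, the more complete one.
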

\begin{proof}
It is obvious that $B(t)$ and $D(t)$ are included in $C^{2}({\bf R})$ since $b(t) \in C^2 ({\bf R})$ (i.e., $Q(t, \cdot)$ and $Q'(t,
 \cdot)/Q(t, \cdot) $ are in
 $C^1({\bf R})$). Hence, we only
 prove the uniqueness of $B(t)$ and $D(t)$. Further, we only prove the uniqueness of
 $B(t)$ since the uniqueness of $D(t)$ can be proven in the same way. 

First, we prove that for all $t$ and $\xi$, if $B_1(t,\xi)$ and $B_2(t, \xi)$
 satisfy \eqref{K3}, then $B_1 (t, \xi) = B_2 (t, \xi)$. Let $\ep_E
 < mc/ (E_{0,0})$ and $0\leq t \leq \ep_E$. Then by \eqref{K3} and 
\begin{align}\label{t2}
\left|
\frac{Q'(s, \xi)}{Q(s, \xi)} 
\right| = \left|
\frac{c^2(\xi+b(s)) \cdot b'(s)}{c^2 (\xi+b(s))^2 + (mc^2)^2}
\right| \leq \frac{c |b'(s)|}{Q(s, \xi)} \leq \frac{c|b'(s)|}{mc^2},
\end{align}
we have 
\begin{align}
\left| \nn
B_1(t, \xi) - B_2 (t, \xi)
\right| &= 
\left| 
\int_0^t \frac{Q'(s, \xi)}{2Q(s, \xi)} \left( 
\int_{B_2(s, \xi)}^{B_1 (s, \xi)} \frac{d}{d \tau} \sin (2 \tau) d \tau 
\right) ds
\right| \\  & \leq \frac{E_{0,0}}{ mc} \int_0^t \left| 
\int_{B_2(s, \xi)}^{B_1 (s, \xi)} \cos (2\tau) d \tau
\right| ds.  \label{tt6}
\end{align}
With \eqref{tt6} and since $ t \leq \ep _E <mc/E_{0,0} $, it follows that $\sup_{t \leq
 \ep_{E}, \xi \in {\bf R}} |B_1 (t, \xi) - B_2 (t, \xi)|=0$. For $t \leq
 2 \ep_E$, we have 
\begin{align*}
& \sup_{\ep_E \leq t \leq 2 \ep_E, \ \xi\in {\bf R}}|B_1(t, \xi) - B_2
 (t, \xi)| \\ & \quad \leq
 \frac{E_{0,0}}{mc} \left( 
 0 + \sup_{\ep_{E} \leq t \leq 2 \ep_{E}, \ \xi\in {\bf R}}
 \int_{\ep_E}^t |B_1(s, \xi) - B_2 (s, \xi)| ds
\right) .
\end{align*}  
This also implies $\sup_{\ep_E \leq t \leq 2\ep_E, \xi} |B_1(t,
 \xi)-B_2 (t, \xi)| =0$. By repeating the same calculation for $t \in [n
 \ep _E, (n+1)\ep_E]$ with $n \in {\bf N}$, the lemma holds.
\end{proof} 
First, we impose
$\hat{\psi}_{0,j}  \in C_0(\bfR ^n)$, $j \in \{0,1\}$, in \eqref{1} and define $\varphi_{j} \in C_0^{\infty}( \bfR ^n)$ such that 
\begin{align}\label{K7}
\varphi_0 (\xi) = 
1, \  \mbox{on the support of $\hat{\psi}_{0,0} (\xi)$} , \quad 
\varphi_1 (\xi) = 
1, \   \mbox{on the support of $\hat{\psi}_{0,1} (\xi)$}.
\end{align}
Noting \eqref{5}, \eqref{L5}, \eqref{L6},
\eqref{8}, $|Q(t_0, \xi)|^{-1} \leq C$, $|Q(0, \xi)|^{-1} \leq C$,
and the fact that $|Q(t_0, \xi)| \leq C$ holds on the support of $\varphi_j
(\xi)$, the following proposition immediately holds.
\begin{proposition} \label{Q3}
Let $\zeta _0 (t, \xi)$ and $\zeta _1 (t, \xi)$ be equal to those defined in
 \eqref{L5} and \eqref{L6}, respectively, and let $\varphi _0 (\xi)$ and
 $\varphi_1 (\xi)$ be equal to those defined in \eqref{K7}. Then for every fixed $t_0
 \in \bfR$,
\begin{align}
\sup_{\xi \in {\bf R}^n} |\zeta _j ^{(N)} (t_0, \xi) \varphi_j (\xi)| \leq C_{j,N} 
\label{K0}
\end{align}
 holds, where  $j
 \in \{0,1\}$ and $N \in \{0,1,2\}$.
\end{proposition}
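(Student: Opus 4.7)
The plan is to exploit the compact support of $\varphi_j$ together with the explicit representations \eqref{L5}--\eqref{8} of $\zeta_j$ in order to reduce everything to uniform bounds on the auxiliary functions $A,B,C,D,Q$, and then to handle the second derivative by invoking the ODE \eqref{5} directly.

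First I would observe that since $\varphi_j \in \Czero{n}$, its support is a fixed compact set $K_j \subset \bfR^n$, and since $b(t_0)$ is fixed, the quantity
\[
Q(t_0,\xi) = \bigl(c^2(\xi + b(t_0))^2 + (mc^2)^2\bigr)^{1/2}
\]
is bounded on $K_j$ by a constant $C(t_0)$; dually, $|Q(t_0,\xi)|^{-1}$ and $|Q(0,\xi)|^{-1}$ are bounded by $(mc^2)^{-1}$ for all $\xi$. Similarly $|Q(0,\xi)|\le C$ on $K_j$. These observations already control the $\xi$--dependence in the coefficients of the representations \eqref{L5}--\eqref{L6} uniformly on the support of $\varphi_j$.

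Next I would bound $A(t_0,\xi)$ and $C(t_0,\xi)$ uniformly in $\xi$ using the closed-form expressions in \eqref{8}. The key ingredient is the pointwise estimate used in \eqref{t2}, namely
\[
\left| \frac{Q'(s,\xi)}{Q(s,\xi)} \right| \le \frac{c|b'(s)|}{mc^2} \le \frac{E_{0,0}}{mc},
\]
which holds for all $\xi$ thanks to \eqref{E1}. Since $\sin^2$ and $\cos^2$ are at most $1$, the integrals in \eqref{8} from $0$ to $t_0$ are bounded in absolute value by $|t_0|E_{0,0}/(mc)$, so that
\[
|A(t_0,\xi)| \le e^{|t_0|E_{0,0}/(mc)}, \qquad |C(t_0,\xi)| \le \frac{1}{mc^2}\,e^{|t_0|E_{0,0}/(mc)},
\]
both bounds being uniform in $\xi \in \bfR^n$ (and in particular on $K_j$).

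Finally I would assemble the pieces. For $N=0$, the bounds $|\zeta_0(t_0,\xi)\varphi_0(\xi)| \le |A(t_0,\xi)|$ and $|\zeta_1(t_0,\xi)\varphi_1(\xi)| \le |C(t_0,\xi)|$ follow immediately from \eqref{L5}--\eqref{L6} since $|\cos|,|\sin| \le 1$. For $N=1$, the same formulas give $|\zeta_0'(t_0,\xi)\varphi_0(\xi)| \le |A(t_0,\xi)||Q(t_0,\xi)|\varphi_0(\xi)$ and likewise for $\zeta_1'$; boundedness on $K_j$ is then a product of the bounds on $A$ (resp.\ $C$) and on $Q(t_0,\xi)$ over the compact set $K_j$. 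For $N=2$ I would not differentiate \eqref{L5}--\eqref{L6} at all; instead I would simply use \eqref{5} to write $\zeta_j''(t_0,\xi) = -Q(t_0,\xi)^2 \zeta_j(t_0,\xi)$, so that $|\zeta_j''(t_0,\xi)\varphi_j(\xi)| \le |Q(t_0,\xi)|^2 |\zeta_j(t_0,\xi)\varphi_j(\xi)|$ is controlled by the $N=0$ case together with the compact-support bound on $Q(t_0,\xi)$. There is no real obstacle here: the argument is essentially a bookkeeping exercise, and the only mildly technical point is the estimate on $A,C$, which is immediate from \eqref{t2} and the bound on $|b'(s)|$ supplied by \eqref{E1}.
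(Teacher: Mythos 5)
Your argument is correct and follows essentially the same route the paper indicates: bound $A$ and $C$ via the explicit formulas \eqref{8} and the estimate \eqref{t2}, use $|Q(t_0,\xi)|\le C$ on the compact support of $\varphi_j$ together with $|Q|^{-1}\le (mc^2)^{-1}$ for the $N=0,1$ cases, and invoke the ODE \eqref{5} for $N=2$. The paper states the proposition "immediately holds" from exactly these ingredients, and your write-up just supplies the bookkeeping (with a $t_0$-dependent constant, which is all the statement requires).
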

By this proposition, $\zeta _j ^{(N)} (t_0,p) \varphi_j (p)$ can
be defined as a bounded operator on $L^2({\bf R}^n) $ through 
the Fourier transform since
\begin{align*}
\left\| 
\SCR{F}_1^{-1}\zeta _j ^{(N)} (t_0,\xi) \SCR{F}_1^{+1}\varphi_j(\xi) \psi_{0,j} 
\right\|_{L^2({\bf R}^n)} &=  \left\| \zeta _j ^{(N)} (t_0, \xi)\varphi _j (\xi) \SCR{F}_1^{+1} \psi_{0,j} 
\right\|_{L^2({\bf R}^n)} \\ &\leq C_{j,N} \left\| \psi_
 {0,j}\right\|_{L^2({\bf R}^n)}
\end{align*}
holds. It also follows that for any
fixed $t$, $\zeta
_j(t,p)$ can be defined on $\SCR{F}_1^{-1} C_0({\bf R}^n)$ since $\varphi_j (\xi)$ is independent of $t$ and satisfies $\zeta _j^{(N)} (t,p)
\varphi_j(p) \psi_{0,j} = \zeta _j^{(N)}(t,p) \psi_{0,j}$. The following proposition extends the domains of $\zeta _1(t,p)$ and $\zeta
_2 (t,p)$ from
$\SCR{F}_1^{-1}  C_0({\bf R}^n)$ to $H^{1/2} ({\bf R}^n)$ and $H^{-1/2}
({\bf R}^n)$, respectively.

\begin{proposition}\label{P1}
Suppose Assumption $\mathrm{ \bf (E1)}$ holds. Let $A(t,\xi)$ and $C(t,\xi)$ be equal to those defined in \eqref{8}. Then there exist $0< C_{0} < \infty$
 and $0 < C_{1} <  \infty$, independent of $t$ and $\xi$, such that   
\begin{align}\label{9}
& \frac{Q(0, \xi)^{1/2}}{Q(t, \xi)^{1/2}} e^{-C_{0}}  \leq
 \left|
A(t, \xi)
\right| \leq \frac{Q(0, \xi)^{1/2}}{Q(t, \xi)^{1/2}} e^{C_{0}} ,  \\ & \label{9-2}
 \frac{1}{Q(t, \xi)^{1/2}Q(0, \xi)^{1/2}} e^{-C_{1}}  \leq
\left| 
C(t, \xi) 
\right| \leq \frac{1}{Q(t, \xi)^{1/2}Q(0, \xi)^{1/2}} e^{C_{1}} 
\end{align}
hold. 
\end{proposition}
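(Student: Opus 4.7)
The approach is to extract the monotone logarithmic part of the exponent in \eqref{8} and reduce the proposition to a uniform bound on a pair of oscillatory integrals. Using $\sin^2\theta=(1-\cos 2\theta)/2$, $\cos^2\theta=(1+\cos 2\theta)/2$, and $\int_0^t (Q'/Q)(s,\xi)\,ds=\log(Q(t,\xi)/Q(0,\xi))$, formulas \eqref{8} rewrite as
\begin{align*}
A(t,\xi)=\left(\frac{Q(0,\xi)}{Q(t,\xi)}\right)^{\!1/2}\! e^{\frac12 I_B(t,\xi)},\qquad C(t,\xi)=\frac{e^{-\frac12 I_D(t,\xi)}}{\sqrt{Q(0,\xi)\,Q(t,\xi)}},
\end{align*}
where $I_B(t,\xi):=\int_0^t (Q'/Q)(s,\xi)\cos(2B(s,\xi))\,ds$ and $I_D$ is defined identically with $D$ in place of $B$. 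The prefactors are exactly the envelopes in \eqref{9}, \eqref{9-2}, so the claim reduces to showing $|I_B(t,\xi)|$ and $|I_D(t,\xi)|$ are bounded uniformly in $t\in\bfR$ and $\xi\in\bfR^n$.

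To estimate $I_B$ (the argument for $I_D$ being identical), I do not divide by $B'$---which can become small on the set $\{s:|\xi+b(s)|\leq 2E_{0,0}/(mc^2)\}$ where $Q$ is minimal---but introduce the globally positive free phase $\phi(s,\xi):=\int_0^s Q(\tau,\xi)\,d\tau$, so that $\phi'=Q\geq mc^2$. Writing $g:=B-\phi$, equation \eqref{6} gives $g'=-(Q'/2Q)\sin(2B)$ and hence $|g'|\leq|Q'|/(2Q)$, while $g$ itself need not be bounded. Expanding $\cos(2B)=\cos(2\phi)\cos(2g)-\sin(2\phi)\sin(2g)$ and using $(\sin 2\phi)'=2Q\cos 2\phi$, $(\cos 2\phi)'=-2Q\sin 2\phi$, integration by parts yields
\begin{align*}
I_B(t,\xi)=\left[\frac{Q'(s,\xi)\sin(2B(s,\xi))}{2Q(s,\xi)^2}\right]_{s=0}^{s=t}-\int_0^t\left\{\Bigl(\tfrac{Q'\cos 2g}{2Q^2}\Bigr)'\!\sin 2\phi+\Bigl(\tfrac{Q'\sin 2g}{2Q^2}\Bigr)'\!\cos 2\phi\right\}ds.
\end{align*}

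The boundary term vanishes at $s=0$ since $B(0)=0$, and at $s=t$ is bounded by $|Q'|/(2Q^2)\leq c|b'|/(2Q^2)\leq E_{0,0}/(2m^2 c^3)$, uniformly in $t,\xi$. For the remainder, a direct calculation from $Q=\sqrt{c^2(\xi+b)^2+(mc^2)^2}$ together with $c|\xi+b|/Q\leq 1$ gives $|Q'|\leq c|b'|$ and $|Q''|\leq 2c^2|b'|^2/Q+c|b''|$; combined with $|g'|\leq|Q'|/(2Q)$ and $Q\geq mc^2$, one obtains the pointwise estimate
\begin{align*}
\left|\Bigl(\tfrac{Q'\cos 2g}{2Q^2}\Bigr)'\right|+\left|\Bigl(\tfrac{Q'\sin 2g}{2Q^2}\Bigr)'\right|\lesssim\frac{|Q''|}{Q^2}+\frac{(Q')^2}{Q^3}\lesssim\frac{|b'|^2+|b''|}{Q^2}.
\end{align*}
The second inequality in \eqref{14} of Assumption {\bf (E1)} then gives a uniform $L^1$ estimate of the remainder, completing the bound $|I_B(t,\xi)|\leq 2C_0$. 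The main difficulty is the choice of the "correct" phase against which to integrate by parts: dividing by $B'$ fails precisely on the set where $|\xi+b|$ is small, forcing one to split $[0,t]$ and invoke the first bound of \eqref{14} to control the crossings; the substitution $\phi'=Q$ bypasses this by exploiting the global lower bound $Q\geq mc^2$, at the price of carrying the harmless factors $\cos 2g,\sin 2g$ through the calculation.
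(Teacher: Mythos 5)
Your reduction is the same as the paper's: both extract $\tfrac12\log(Q(t)/Q(0))$ via $\sin^2\theta=(1-\cos 2\theta)/2$ (resp. $\cos^2$) and reduce the proposition to a uniform bound on $I_B=\int_0^t (Q'/Q)\cos(2B)\,ds$ and its analogue for $D$. Where you genuinely diverge is in how that oscillatory integral is tamed. The paper splits $[0,t]$ into $\Omega=\{s:|\xi+b(s)|\leq 2E_{0,0}/(mc^2)\}$ and its complement: on $\Omega$ it estimates crudely using the first inequality of {\bf (E1)} (the $e_0$ bound), and off $\Omega$ it establishes $B'\geq Q/2$ and integrates by parts against the true phase $B$, writing $\cos(2B)=(\sin 2B)'/(2B')$, with the resulting terms controlled by the $e_1$ bound. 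You instead integrate by parts against the free phase $\phi=\int_0^s Q$, exploiting the global lower bound $\phi'=Q\geq mc^2$ and carrying the correction $g=B-\phi$ (with $|g'|\leq |Q'|/(2Q)$) through the computation; your boundary term telescopes back to $[Q'\sin(2B)/(2Q^2)]_0^t$ and your remainder is controlled by the $e_1$ bound alone. I checked your identities ($g'=-(Q'/2Q)\sin 2B$ from \eqref{6}, the recombination of the boundary term via $\sin(2\phi+2g)=\sin 2B$, and the estimates $|Q'|\leq c|b'|$, $|Q''|\lesssim |b'|^2/Q+|b''|$) and they are correct. Your route buys two things: it needs no decomposition of the time axis (and hence no discussion of the boundary set $\partial([0,t]\setminus\Omega)$, which in the paper is handled somewhat tersely and could in principle have many components), and it shows that this proposition uses only the second inequality of {\bf (E1)} --- the $e_0$ hypothesis is not needed here. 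The paper's route keeps the stationary-phase structure tied to the actual phase $B$, which is closer in spirit to Hochstadt's original argument, but your version is cleaner and slightly more economical in hypotheses.
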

\begin{proof}
For simplicity, we denote $Q(t,\xi) = Q(t)$ and $B(t, \xi) = B(t)$. 
We only calculate the term $A(t ,\xi) \varphi (\xi)$; the term $C (t ,\xi)
 \varphi (\xi)$ can
be calculated in a similar manner. 

By simple calculations, it follows that 
\begin{align}\nn
& \int_0^{t} \frac{Q'(s)}{Q(s)} \sin ^2(B(s)) ds \\ & \quad = 
\frac{1}{2} \left(
\LOG{Q(t)}- \LOG{Q(0)} - \int_0^t  \frac{Q'(s)}{Q(s)} \COS{2 B(s)} ds
\right).  \label{20}
\end{align}
Hence, to prove Proposition \ref{P1}, it suffices to show that the last term of the right-hand side of the above equation
 is uniformly bounded in $t$ and $\xi$.
Noting \eqref{t2}, we have 
\begin{align} \nn
B'(s) &= Q(s) - Q'(s)Q(s)^{-1} \sin (2B(s))/2  \geq Q(s)/2 +
 \left(  Q(s)/2 - E_{0,0}/(2mc)
\right) \\  & \geq Q(s)/2 + \left( c |\xi + b(s)|  -
 E_{0,0}/(mc)\right) /2. \label{M0}
\end{align}
Next, we define  
\begin{align*}
\Omega &:=
\left\{ 
s \in [0,t] \ :\  |\xi  +b (s)| \leq 2 E_{0,0}/(mc ^2 )
\right\}.
\end{align*}
Then by Assumption {\bf (E1)} and \eqref{t2}, we obtain that 
\begin{align*}
\left|\int_{\Omega} \frac{Q'(s)}{2Q(s)} \cos (B(s)) ds \right| \leq
C \int_{\Omega} |b'(s)| ds  \leq C e_0
\end{align*}
is bounded and independent of $t$ and $\xi$. Conversely, on
 the region $[0,t] \backslash \Omega $, by \eqref{M0}, it always follows that 
\begin{align} \label{M1}
B'(s) \geq Q(s)/2; 
\end{align} 
hence, it also follows that 
\begin{align*}
& \int_{[0,t] \backslash \Omega } \frac{Q'(s)}{Q(s)} \cos (2B(s))
 ds \\ & = 
\left[ \frac{Q'(s)}{2Q(s)B'(s)} \sin (2B(s))\right]_{\partial ([0,t]
 \backslash \Omega )}
 - \frac12 \int_{[0,t] \backslash \Omega }  z_1(s) \SIN{2B(s)} ds, 
\end{align*}
where 
\begin{align*}
 z_1(s)   &= 
\frac{1}{(B'(s))^2 Q(s)^2} \left\{ Q''(s)Q(s)B'(s) -(Q'(s))^2 B'(s) -Q(s)Q'(s)B''(s)
\right\}
\\ &=   \frac{Q''(s)}{(B'(s))^2} - \frac{ 2(Q'(s))^2}{Q(s)(B'(s))^2} 
+ \frac{(Q'(s))^2 }{2Q(s)^2 B'(s)} \cos (2B(s)) .
\end{align*}
Since \eqref{M1} holds and 
\begin{align*}
|Q'(s)| \leq C |b'(s)|, \quad |Q''(s)| \leq C \left( |b''(s)|  +
 |b'(s)|^2 \right),   
\end{align*}
it follows that on $[0,t] \backslash \Omega$, 
\begin{align*}
|z_1 (s)| \leq C \left( 
|b''(s)| Q(s)^{-2} + |b'(s)|^2Q(s)^{-2}
\right), 
\end{align*}
where  $|Q(s)|^{-1} \leq C$ and $|b'(s)| = |E(s)| \leq
 E_{0,0}$. Hence, by Assumption {\bf (E1)}, 
\begin{align*}
\left| 
\int_0^t \frac{Q'(s) \cos (2B(s))}{Q(s)} ds 
 \right| &\leq C + Ce_0 + C \int_0^t \frac{|b'(s)|^2 + |b''(s)|}{c^2(\xi
 +b(s))^2 + (mc^2)^2} ds \\ &\leq C(1 +e_0 + e_1).
\end{align*} 
Therefore, the proposition holds.
\end{proof}

By analyzing $\zeta _1$ and $\zeta _2$, we arrive at the following theorem. 
\begin{theorem} \label{T0} 
Let $\psi_0(t,x)$, $\psi _{0,0}$, and $\psi_{0,1}$ be equal to those defined
 in \eqref{1}. Suppose Assumption $\mathrm{ \bf (E1)}$ holds and that $\hat{\psi}_{0,0} \in C_0 (\bfR ^n)$ and
 $\hat{\psi}_{0,1} \in C_0 (\bfR ^n)$. Then for all $\theta
 \in \bfR $, there exists $0< C_{0,\theta }< \infty$ such that 
\begin{align}\label{l1}
\left\| (L (0,p))^{\theta} {\psi} _0 (t,x) \right\|_{L^2(\bfR ^n)}
 \leq C_{0,\theta} |b(t)|^{(2\theta -1/2)}, \quad |t| \to \infty
\end{align}
holds. In particular, 
\begin{align}\label{Q11}
\left\| \psi _0 (t,x) \right\|_{L^2(\bfR ^n)} \leq C_{0,0} |b(t)|^{- 1/2},
 \quad |t| \to \infty
\end{align}
holds, where $C_{0, \theta}$ is a constant depending only on the volume of the support
 of $\hat{\psi}_{0,0}$ and $\hat{\psi}_{0,1}$.
\end{theorem}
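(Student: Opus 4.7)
The plan is to Fourier-transform the problem, use the explicit formula \eqref{L4}, and then invoke the pointwise bounds on $A(t,\xi)$ and $C(t,\xi)$ supplied by Proposition \ref{P1}. Concretely, start from $\psi_0(t,x) = e^{ib(t)\cdot x} u_0(t,x)$, where $\hat{u}_0(t,\xi) = \zeta_0(t,\xi)\hat{\psi}_{0,0}(\xi) + \zeta_1(t,\xi)\hat{\psi}_{0,1}(\xi)$. Since $L(0,p)$ commutes with $\SCR{F}_1^{\pm 1}$ and multiplication by $e^{ib(t)\cdot x}$ is a translation on the Fourier side, Plancherel together with a change of variable $\eta = \xi - b(t)$ gives
\begin{align*}
\left\|(L(0,p))^{\theta}\psi_0(t,\cdot)\right\|_{L^2(\bfR^n)}^2
= \int_{\bfR^n} L(t,\eta)^{2\theta}\,|\hat{u}_0(t,\eta)|^2\, d\eta,
\end{align*}
where $L(0,\eta+b(t))=L(t,\eta)$ has been used. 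This reduces the theorem to a pointwise estimate of $L(t,\xi)^{\theta}\zeta_j(t,\xi)$ on the compact support of $\hat{\psi}_{0,j}$.

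Next, using $|\COS{B(t,\xi)}|\le 1$ and $|\SIN{D(t,\xi)}|\le 1$ together with Proposition \ref{P1}, I obtain
\begin{align*}
|L(t,\xi)^{\theta}\zeta_0(t,\xi)| \le e^{C_0} Q(t,\xi)^{2\theta-1/2}Q(0,\xi)^{1/2},
\qquad
|L(t,\xi)^{\theta}\zeta_1(t,\xi)| \le e^{C_1} Q(t,\xi)^{2\theta-1/2}Q(0,\xi)^{-1/2}.
\end{align*}
The cutoffs $\varphi_j$ from \eqref{K7} confine $\xi$ to a fixed compact set, so $Q(0,\xi)^{\pm 1/2}$ is bounded above and away from zero by a constant depending only on the support. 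Choose $|t|$ so large that $|b(t)|\ge 2R$, where $R$ is the radius of the combined supports; then $|b(t)|/2 \le |\xi+b(t)| \le 2|b(t)|$ on the support, and hence $c_1 |b(t)| \le Q(t,\xi) \le c_2|b(t)|$ with $c_1,c_2>0$ independent of $\xi$. Raising to the (possibly negative) power $2\theta-1/2$ and absorbing the $\theta$-dependent constant yields $Q(t,\xi)^{2\theta-1/2} \le C_\theta |b(t)|^{2\theta-1/2}$ uniformly on the supports.

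Inserting these bounds into the Fourier-side integral and applying the triangle inequality give
\begin{align*}
\left\|(L(0,p))^{\theta}\psi_0(t,\cdot)\right\|_{L^2(\bfR^n)}
\le C_\theta |b(t)|^{2\theta - 1/2}
\bigl(\|\hat{\psi}_{0,0}\|_{L^2} + \|\hat{\psi}_{0,1}\|_{L^2}\bigr),
\end{align*}
which is exactly \eqref{l1}; the $L^2$ norms can be trivially bounded by the volume of the supports times the $L^\infty$ norms, giving a constant $C_{0,\theta}$ of the advertised form. The specialization $\theta=0$ yields \eqref{Q11} with exponent $-1/2$.

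I do not expect a serious obstacle here since all the hard work is already encoded in Proposition \ref{P1}; the only subtle point is the sign of $2\theta-1/2$ when passing from $Q(t,\xi)$ to $|b(t)|$. This is handled uniformly by the two-sided estimate $c_1|b(t)|\le Q(t,\xi)\le c_2|b(t)|$ that holds on the compact spectral support of the initial data once $|b(t)|$ is large enough, so the same formula $C_\theta |b(t)|^{2\theta-1/2}$ applies for every real $\theta$.
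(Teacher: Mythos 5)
Your proof is correct and follows essentially the same route as the paper's: conjugate $(L(0,p))^{\theta}$ by $e^{ib(t)\cdot x}$ to get $L(t,\xi)^{\theta}$ on the Fourier side, bound $|\zeta_j(t,\xi)|$ via the two-sided estimates \eqref{9} and \eqref{9-2} of Proposition \ref{P1}, and use that $Q(t,\xi)\sim |b(t)|$ on the compact spectral support once $|b(t)|$ is large. Your write-up is in fact more careful than the paper's (which states the comparison $L(t,\xi)\sim|b(t)|^2$ with a typo and omits the sign discussion for $2\theta-1/2$), but there is no difference in substance.
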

Solutions to \eqref{1} when the electric fields are independent of time have been investigated (see
Narozhnyi and Nikishov \cite{NN}, Tanji
\cite{Ta}, and \cite{Ve}); rotating electric fields were
investigated by Eliezer, Raicher, and Zigler \cite{ERZ}. However,
time-decay estimates \eqref{l1} and \eqref{Q11} have not been
considered. 
\begin{proof}
On the support of $\hat{\psi}_{0,0}$ and $\hat{\psi}_{0,1}$, $Q(0,\xi)^{1/2}$ is
 bounded and 
$$ 
C |b(t)|^{2} \leq L(t, \xi)^{1/2} \leq C |b(t)|^{2} $$ holds for $t \gg 1$. Thus, the inequality  
\begin{align*}
\left\|
L(t,\xi)^{\theta}\zeta _j (t, \xi) \varphi _j (\xi) \hat{\psi}_{0,1} (t,
 \xi)  
\right\|_{L^2(\bfR ^n)} 
& \leq C \left\| L(t,\xi)^{ \theta - 1/4} L(0, \xi)^{1/4} \varphi_j (\xi)
 \hat{\psi}_j \right\|
\\ & \leq C|b(t)|^{(2 \theta -1/2)} \|
 \hat{\psi}_{0,1} (t, \xi) \|_{L^2(\bfR ^n)}
\end{align*}
holds from \eqref{9} and \eqref{9-2}, where $j \in \{0,1\}$. Therefore, Theorem \ref{T0}
 holds. 
\end{proof}

\section{Proof of Theorem \ref{T2}}

In this section, we prove Theorem \ref{T2}. First, we decompose $U_{0,\alpha} (t)$ by using Hochstadt type
representations \eqref{L5}, \eqref{L6}, and \eqref{K3}. Then, by using
this factorization of $U_{0, \alpha} (t)$, we prove the stability and
instability properties.

\subsection{Factorization of $U_{0, \alpha} (t)$}
Noting the
definition of $U_{0,\alpha}
(t)$ (see \eqref{11}), $U_{0, \alpha} (t)$ can be factorized by 
\begin{align} \nn 
 U_{0,\alpha} (t) &:= K_{\alpha}^{1/2} U_{A_0} (t) (K_{\alpha}^{1/2})^{-1}  \\
 \nn  &= (e^{ib(t) \cdot x})_M \MAT{L(t,p)^{1/4-\alpha /2} & 0 \\ 0 &
 L(t,p)^{-1/4 - \alpha /2}} \\ & \quad \times \nn
\MAT{\zeta _0 (t,p) & \zeta _1 (t,p) \\ i \zeta _0'(t,p) & i \zeta _1
 '(t,p)} \MAT{L(0,p)^{-1/4+ \alpha /2} & 0 \\ 0 & L(0,p)^{1/4 + \alpha /2}}
\\ \nn
&= \MAT{ e^{ib(t) \cdot x} {\mathscr F}_1^{-1}
 {\mathscr L}_{\alpha} (t, \xi) & 0 \\ 0 &  e^{ib(t) \cdot x} {\mathscr F}_1^{-1}
 {\mathscr L}_{\alpha} (t, \xi) } \\  \quad & \quad \times\MAT{ {\mathscr G}_0(t,\xi)
 \cos (B(t,\xi)) & {\mathscr G} _1(t, \xi) \sin (D(t,\xi)) \\ 
 -i {\mathscr G}_0 (t, \xi) \sin (B(t,\xi)) & i {\mathscr G}_1 (t, \xi)\cos (D(t,\xi))
} \MAT{ {\mathscr F}_1^{+1} & 0 \\ 0 & {\mathscr F}_1^{+1}}, \label{AV}
\end{align}
where ${\mathscr L}_{\alpha} (t,\xi) = L(t,\xi)^{-
\alpha /2} L(0,\xi)^{ \alpha /2} $, $$ 
{\mathscr G} _0(t, \xi) = e^{\int_0^t (Q'(s,\xi) \cos (2B(s, \xi))
/(2Q(s, \xi))) ds} 
$$ and $$ {\mathscr G}_1(t, \xi) = e^{-
\int _0^t (Q'(s, \xi) \sin (2D(s, \xi)) /(2 Q(s, \xi)) )
ds}.$$ 
This formula is a natural extension of the Avron-Herbst formula.

\subsection{Stability of $U_{0,0} (t)$ on $\SCR{H}$}
Here, we prove the first statement of Theorem \ref{T2}. Noting that ${\mathscr
F}_1^{-1} C_0({\bf
R}^n) \times {\mathscr F}_1^{-1}C_0({\bf R}^n)$ is dense on ${\mathscr H}$, every calculation is
done on $ \Phi _{0,0} \in {\mathscr F}_1^{-1}C_0({\bf R}^n) \times {\mathscr F}_1^{-1}C_0({\bf R}^n)$. 
By \eqref{AV} with $\alpha =0$, together with the fact that 
\begin{align}\label{L7}
e^{-C_0} \leq | {\mathscr G}_0(t, \xi)| \leq e^{C_0} , \quad e^{-C_1} \leq
 |{\mathscr G}_1 (t, \xi)| \leq e^{C_1},
\end{align}
holds by \eqref{9} and \eqref{9-2}, we have that there exists $C >0$
independent of $t$ and the support of $\Phi_{0,0}$ such that $\left\| 
U_{0,0}(t) \Phi_{0,0}
\right\|_{\SCR{H}} \leq C \left\| \Phi _{0,0}\right\|_{\SCR{H}}$ holds. By the density argument, we also have
$\left\| U_{0,0}(t)\right\|_{{\mathscr B}({\mathscr H})} \leq C$. Next, we prove $\left\| U_{0,0}(t) \Phi_{0,0}
\right\|_{\SCR{H}} \geq C >0$. Letting $\Phi_{0,0} = \left( \phi_0, \phi_1
\right)^{\mathrm{T}}$, we have
\begin{align}\label{L9}
\left\| 
U_{0,0} (t) \Phi_{0,0}
\right\|_{\SCR{H}}^2 &= \left\| \SCR{G} _0(t ,\xi) \hat{\phi}_0 
\right\|^2_{L^2(\bfR ^n)} + \left\| 
\SCR{G} _1 (t, \xi) \hat{\phi}_1
\right\|_{L^2(\bfR ^n)} ^2 \\ & \qquad - 2 \mathrm{Re} \left( 
\SCR{G} _0(t , \xi) \SCR{G} _1 (t, \xi)\sin (B(t, \xi)-D (t, \xi)) \hat{\phi}_0, \hat{\phi}_1
\right)_{L^2(\bfR ^n)}. \nn
\end{align}
Using the fact that
\begin{align*}
\zeta _0 (t, \xi) \zeta _1'(t, \xi) - \zeta _0 '(t, \xi) \zeta _1 (t,
 \xi) = 1,
\end{align*}
we obtain 
\begin{align}\label{L8}
\SCR{G} _0 (t , \xi)\SCR{G} _1(t, \xi) \cos (B(t, \xi)-D (t,\xi)) =1.
\end{align}
Inequalities \eqref{L7} and \eqref{L8} imply that for all
$t \in \bfR$ and $\xi \in \bfR ^n$, there exists $ 0 < \delta \leq 1$ such that
\begin{align*}
| \cos ( B(t, \xi) - D(t, \xi) ) | > \delta
\end{align*}
holds, i.e., 
$$
|\SIN{B(t, \xi) -D(t, \xi)}| < \sqrt{1- \delta ^2} 
$$ holds. Using this inequality, \eqref{L7}, \eqref{L9}, and 
\begin{align*}
\left\|U_{0,0} (t) \Phi _{0,0} \right\|_{\SCR{H}}^2 & \geq (1- \sqrt{1-
\delta ^2}) \left( \left\| \SCR{G} _0 (t ,\xi) \hat{\phi}_0
\right\|^2_{L^2({\bf R}^n)} + \left\|
\SCR{G}_1 (t, \xi) \hat{\phi}_1 \right\|^2_{L^2({\bf R}^n)} \right) \\ &
 \geq (1- \sqrt{1- \delta ^2}) \left(\min \{ e^{-2C_0}, e^{-2C_1}\} \right) \left\| \Phi_{0,0} \right\|^2_{\SCR{H}}, 
\end{align*}
we obtain Theorem \ref{T2}.
\subsection{Instability of $U_{0, \alpha}
  (t)$, $\alpha \neq 0$, on $\SCR{H}$}
We now complete the proof of Theorem \ref{T2}. By \eqref{AV}, for $\Psi _0 =(\psi_0, \psi_1)^{\mathrm{T}}\in \SCR{F}^{-1}C_0 (\bfR ^n) \times  
\SCR{F}^{-1}C_0(\bfR ^n)$, simple calculations show that 
\begin{align*}
\left\| 
U_{0, \alpha} (t) \Psi_{0}
\right\|_{\SCR{H}}^2 &= \left\|  \sigma_{0, \alpha}(t, \xi)\hat{\psi}_0 \right\|^2_{L^2(\bfR ^n)} +  \left\|  \sigma_{1, \alpha}(t, \xi)\hat{\psi}_1 
\right\|^2_{L^2(\bfR ^n)} \\ & \quad 
-2 \mathrm{Re} (\sigma_{0, \alpha } (t, \xi) \sigma _{1, \alpha } (t, \xi) \sin (B(t, \xi) -D(t, \xi))\hat{\psi}_0, \hat{\psi} _1)_{L^2(\bfR ^n)}
\end{align*}
holds, where $\sigma _{0, \alpha}$ and $\sigma _{1, \alpha}$ are defined
by 
\begin{align*}
\sigma _{0, \alpha} (t, \xi) = \SCR{G}_0(t, \xi) \SCR{L}_{\alpha} (t,
 \xi), \quad 
\sigma _{1, \alpha} (t, \xi) = \SCR{G}_1(t, \xi) \SCR{L}_{\alpha} (t,
 \xi).
\end{align*} 
In the same way as the proof of the stability of $U_{0,0}(t,0)$, we have that there exist $c_{00} >0$ and $\delta _{00} > 0$ such that 
\begin{align*}
\left\| 
U_{0, \alpha } (t) \Psi_{0}
\right\|_{\SCR{H}}^2 
\begin{cases} 
\leq  & c_{00} \left(\left\|  \sigma_{0, \alpha}(t, \xi)\hat{\psi}_0
 \right\|^2_{L^2(\bfR ^n)} +  \left\|  \sigma_{1, \alpha }(t, \xi)\hat{\psi}_1 \right\|^2_{L^2(\bfR ^n)}\right), \\ 
\geq  & \delta_{00}\left(\left\|  \sigma_{0, \alpha }(t, \xi)\hat{\psi}_0
 \right\|^2_{L^2(\bfR ^n)} +  \left\|  \sigma_{1, \alpha }(t, \xi)\hat{\psi}_1 \right\|^2_{L^2(\bfR ^n)}\right),
\end{cases}
\end{align*}
holds. On the other hand, by \eqref{9} and \eqref{9-2}, note that for $j \in \{0,1\}$, there exist
$0 < \tilde{c}_j < \tilde{C}_j$ such that  
\begin{align*}
&
\tilde{c}_j \left\| 
\SCR{L}_{\alpha}(t, \xi) \hat{\psi}_j
\right\|_{L^2({\bf R}^n)} ^2 \leq \left\| 
\sigma_{j, \alpha}(t,\xi) \hat{\psi }_j
\right\|^2_{L^2(\bfR ^n)} \leq \tilde{C}_j  \left\| 
\SCR{L}_{\alpha} (t,\xi) \hat{\psi}_j
\right\|_{L^2({\bf R}^n)} ^2
\end{align*}
holds, where $\SCR{L}_{\alpha} (t,\xi) := 
(L(t,\xi))^{- \alpha /2} (L(0, \xi))^{ \alpha /2}$. Clearly, $L(t, \xi)=
c^2(\xi + b(t))^2 + (mc^2)^2 \to \infty $ as $t \to \infty$ holds on $C_0({\bf R}^n) \times C_0({\bf R}^n)$; hence, it follows that
for $t \to \infty$,  
\begin{align*}
\left\| 
\sigma_{j, \alpha } (t, \xi) \hat{\psi}_0
\right\|^2_{L^2(\bfR ^n)} \to
\begin{cases}
0, & \mbox{ if } \alpha >0, \\ 
\infty, & \mbox{ if } \alpha <0,
\end{cases} 
\end{align*}
holds.

\appendix
\def\thesection{APPENDIX \Alph{section}~~}

\section{Klein-Gordon systems with electric fields}
In this section, we construct the (Hamilton) system equation in \eqref{1}. This construction is the same one in \cite{Ve}. Denote 
\begin{align*}
\Psi_0 (t,x) = \MAT{\psi_0(t,x) \\ \psi_{0,1}(t,x)} , \quad 
 \psi_{0,1}(t,x) := (
i \partial _t +q_E) \psi_0 (t,x), \quad \Psi_0 = \MAT{\psi_{0,0} \\ 
\psi_{0,1}}, 
\end{align*}
where $\psi _0 (t,x)$, $\psi_{0,0}$, and $\psi_{0,1}$ are the same as those
defined in \eqref{1}. Then $\Psi _0 (t,x)$ satisfies the following equations: 
\begin{align} \label{L1}
&i\frac{\partial}{\partial t} \Psi_0(t,x) = A_0(t) \Psi_0(t,x), \quad 
A_0(t) = \MAT{-q_E & 1 \\ L (0,p) & -q_E}, \quad \Psi _0 (0,x) = \Psi_0. 
\end{align}
Here, we set $\zeta _j (t, \xi)$ to be that defined in
\eqref{5} (or \eqref{L5} and \eqref{L6}). Focusing on $\zeta ''_j
(t,\xi) = - L(t,\xi) \zeta _j (t,\xi)$, $j \in \{0,1\}$, a propagator for
$A_0 (t)$, $U_{A_0} (t)$ can be described by 
\begin{align}\label{m1}
U_{A_0} (t) = (e^{ib(t) \cdot x})_M \MAT{ \zeta _0 (t,p) & \zeta _1 (t,p) \\
 i \zeta _0 '(t,p) & i \zeta _1 '(t,p)}.
\end{align} 
Indeed,
\begin{align*}
i \frac{\partial}{\partial t} U_{A_0} (t) &= 
(e^{ib(t) \cdot x})_M (i \partial _t -qE(t) \cdot x)_M (\SCR{F_1^{-1}})_M\MAT{\zeta _0(t,\xi) & \zeta
 _1 (t,\xi) \\ i \zeta _0'(t,\xi) & i \zeta _1 '(t,\xi)} (\SCR{F}_1^{+1})_M \\ &= 
(e^{ib(t) \cdot x})_M (\SCR{F}_1^{-1})_M \Bigg\{\MAT{-qE (t)\cdot x & 0 \\ 0 & -qE(t) \cdot x}  \MAT{\zeta _0(t,\xi) & \zeta
 _1 (t,\xi) \\ i \zeta _0'(t,\xi) & i \zeta _1 '(t,\xi)}  \\ & \qquad + 
\MAT{i\zeta _0'(t,\xi) & i\zeta _1'(t,\xi) \\ L(t,\xi) \zeta _0(t,\xi) &
 L(t,\xi) \zeta _1 (t,\xi)} \Bigg\}(\SCR{F}_1^{+1})_M \\ 
&= 
(e^{ib(t) \cdot x})_M \MAT{-q_E & 1 \\ L(t,p) & -q_E} \MAT{\zeta _0(t,p) & \zeta _1 (t,p) \\ i \zeta _0'(t,p) & i \zeta _1'(t,p)}
 = A_0(t) U_{A_0}(t),
\end{align*}
where $e^{ib(t) \cdot x} L(t,p) e^{-ib(t) \cdot x} = L(0,p)  $
and $(i \partial _t) e^{ib(t) \cdot x} = e^{ib(t) \cdot x} (i \partial _t
-b'(t) \cdot x)$.

Next, we define
\begin{align}\label{X12}
\SCR{F} = (\SCR{F}_{1}^{+1})_M , \quad \SCR{F}^{-1} = (\SCR{F}_1^{-1})_M,
\end{align}
and set
\begin{align}\label{L2}
K_{\alpha} (0,p)&= \MAT{(L (0,p) )^{1/2- \alpha}& 0 \\ 0 & (L(0,p))^{-1/2 -
 \alpha}}, \quad 
K_{\alpha}  (0.p)\Phi = \SCR{F}^{-1} K_{\alpha}(0 ,\xi)\SCR{F}\Phi,
\\   
\SCR{K}_{\alpha} &= L^{1/4-\alpha /2} L^2(\bfR ^n) \times
L^{-1/4-\alpha /2}L^2(\bfR ^n)
,  \label{Q0}
\end{align}
for $\alpha \in \bfR$ and $\Phi \in \D{K_{\alpha}}$, where $L^{j} L^2(\bfR ^n)$, $j\in \bfR$ is defined
as the norm space with respect to the norm 
\begin{align*}
\left\| 
u
\right\|_{L^{j}L^2(\bfR ^n)} := \left\| 
(L(0, \xi))^j  \hat{u} (\xi) 
\right\|_{L^2(\bfR ^n_{\xi})}, \quad  
 u\in \SCR{F}_1^{-1} \D{(L(0, \xi))^j}.
\end{align*} 
Furthermore, we define 
\begin{align} \label{K1}
& \left(u , v 
\right)_{{\SCR{K}_{\alpha}}} := \left(
{K}_{\alpha}(0,\xi) \SCR{F} u , \SCR{F} v 
\right)_{{\mathscr H}}, \quad \SCR{F}u, \ \SCR{F}v \in \D{K_{\alpha} (0,
\xi)}, \\ \nonumber
& 
K_{\alpha}^{1/2} :=  (K_{\alpha} (0,p))^{1/2} =
 \MAT{(L(0,p))^{1/4 - \alpha /2} & 0 \\ 0 &
 (L(0,p))^{-1/4 - \alpha /2}},  \\ \nonumber
& 
 \Phi_{0, \alpha} (t,x) = K_{\alpha}^{1/2} \Psi_{0, \alpha} (t,x), \quad
 \Phi_{0, \alpha} (0,x) = \Phi_{0,\alpha}
= K^{1/2}_{\alpha} \Psi _{0}. 
\end{align} 
It can be shown that for $u = (u_1,u_2)^{\mathrm{T}}$, 
\begin{align*}
&\left( 
u ,u 
\right)_{\SCR{K}_{\alpha}}  = \left( 
(L(0, \xi))^{1/2 - \alpha} \hat{u}_1, \hat{u}_1
\right)_{L^2(\bfR ^n)} + \left( (L(0, \xi))^{-1/2 - \alpha} \hat{u}_2 ,
 \hat{u} _2 \right)_{L^2(\bfR ^n)} \\ &= 
\left\| 
(L(0,\xi))^{1/4 - \alpha /2} \hat{u}_1
\right\|_{L^2(\bfR ^n)}^2 + \left\| 
(L(0, \xi))^{-1/4-\alpha /2} \hat{u}_2
\right\|_{L^2(\bfR ^n)}^2 = \left\| u \right\|_{\SCR{K}_{\alpha}}^2.
\end{align*}
Thus, $(\cdot , \cdot)_{\SCR{K}_{\alpha}}$ is the inner
product of $\SCR{K}_{\alpha}$. 
Moreover, notice that for
$\Psi_{0} \in \SCR{K}_{\alpha}$, $\left\| \Phi _{0,\alpha}\right\|_{\SCR{H}} ^2 = (K
_{\alpha} \Psi_{0}, \Psi_{0})_{\SCR{H}} = \left\|
\Psi_{0}\right\|_{\SCR{K}_{\alpha}}$, i.e., $\Phi_{0,\alpha} \in
\SCR{H}$. We then define the system 
\begin{align}\label{X15}
i \frac{\partial}{ \partial t} \Phi _{0, \alpha} (t,x) = H_{0, \alpha}
 (t) \Phi _{0, \alpha} (t,x), \quad \Phi _{0, \alpha}
 (0,x) = \Phi_{0,\alpha} , \quad 
H_{0, \alpha} (t) = K^{1/2}_{\alpha} A_0 (t) (K^{1/2}_{\alpha})^{-1}.
\end{align}
on the Hilbert space $\SCR{H}$. In the same way, $U_{0, \alpha}(t) $, the propagator for
$H_{0,\alpha} (t)$, can be written as
\begin{align}\label{11}
U_{0,\alpha}(t) = K^{1/2}_{\alpha} U_{A_0} (t) (K^{1/2}_{\alpha})^{-1}, \quad 
U_{0, \alpha}(t) ^{-1} = K_{\alpha}^{1/2}
U_{A_0} (t) ^{-1} (K_{\alpha}^{1/2})^{-1},
\end{align} 
and we obtain the system 
\begin{align} \label{50}
i \frac{d}{dt} U_{0, \alpha} (t) \Phi_{0, \alpha} = H_{0, \alpha}(t)
 U_{0, \alpha} (t) \Phi_{0, \alpha} , \quad \Phi_{0, \alpha} \in {\mathscr H}
\end{align}
with Hilbert space ${\mathscr H}$ and complex valued energy $H_{0,
\alpha} (t)$. Straightforward calculations show that $H_{0, \alpha}(t)$ can be written as
\begin{align*}
\MAT{(L(0,p))^{1/4- \alpha /2}(-q_E) (L(0,p))^{-1/4 + \alpha /2} & (L(0,p))^{1/2} \\ (L(0,p))^{1/2} & (L(0,p))^{-1/4 - \alpha /2}(-q_E)(L(0,p))^{1/4 + \alpha/2}  }.
\end{align*}
Noting that for an invertible smooth function $F$ and its inverse $F^{-1 }$, 
\begin{align*}
F(p)^{-1} x F(p) &= \SCR{F}_1^{-1} F(\xi)^{-1} \SCR{F}_1^{+1} x
 \SCR{F}_1^{-1} F(\xi) \SCR{F}_1^{+1} , \quad (\SCR{F}_{1}^{+1} x
 \SCR{F}_1^{-1} = i \nabla _{\xi}), \\ 
 &= 
 \SCR{F}_1^{-1} F(\xi)^{-1} (i \nabla F)(\xi) \SCR{F}_1^{+1} +  \SCR{F}_1^{-1} F(\xi)^{-1} F(\xi) \SCR{F}_1^{+1} x  \SCR{F}_1^{-1}
 \SCR{F}_1^{+1} \\ &= i F(p)^{-1}(\nabla F)(p) + x
\end{align*}
holds. Hence, $(L(0,p))^{- \theta} qE \cdot x (L(0,p))^{\theta} = qE \cdot x + 2 i c^2 \theta qE \cdot p (L(0,p))^{-1}$, and $H_{0, \alpha} (t)$ can be
decomposed into $\hat{H}_{0, \alpha} (t) =   \hat{H}_0 (t) + P_{0, \alpha} (t)$ ;
$\hat{H}_0 (t)$ and $P_{0, \alpha} (t)$ are the same as those defined in
\eqref{Q6} and \eqref{Q7}, respectively. Here, $\hat{H}_0 (t)$ is a symmetric operator
(self-adjoint operator for every fixed $t$, see Lemma 2.1. of \cite{Ve}), but $P_0(t)$ is a non-symmetric operator (clearly, it is a
complex valued operator).

\section{Models of time-dependent electric fields}
Here, we
 give examples of electric fields satisfying Assumption {\bf
 (E1)}. First, we assume that $b(t)$ satisfies $b(t) = (0, 0, ..,
 0,b_j(t),0,...,0)$, $j \in \{1,2,...,n\}$, and $b_j(t)$ can be written as 
\begin{align} \label{Mo1}
b_j(t) = 
\begin{cases}
C_{\gamma} t^{\gamma} + \rho_{\gamma} (t)  & 0<
 \gamma < 1, \\  
C_1 t + \rho_1 (t) + \theta _1 (t) & \gamma = 1, 
\end{cases}
\end{align}
where $C_{\gamma} \neq 0$ is a constant, $\rho_{\gamma} \in C^2({\bf R}^n) $
satisfies $| \rho^{(l)}_{\gamma} (t) | = o(t^{\gamma - l}) $ for $l \in
\{0,1,2\}$, and $| \theta ^{(l)}_{1} (t) | \leq C $ for $l\in\{0,1,2\}$. It
can easily be shown that 
\begin{align} \nn 
\int_{|a + b(s)| \leq 2E_{0,0}/(mc^2)} |b'(s)| ds & \leq 
\int_{|a_j +b_j(s) | \leq 2E_{0,0}/(mc^2)} |b_j'(s)| ds \\ & \leq \left|
 \int_{|\tau
 | \leq 2E_{0,0}/(mc^2)} \frac{|b_j'(s)|}{b_j'(s)} d \tau \right| \leq C
 \label{Mo2}
\end{align}
and 
\begin{align*}
&\int_0^t \frac{|b'(s)|^2 + |b''(s)|}{Q(s, a) ^2} ds \\ & \leq C_{R} + \int_{R}^t
 \frac{ |b'_j(s)|^2+ |b_j''(s)| }{c^2(a_j + C_{\gamma} s^{\gamma} +
 \rho_{\gamma} (s) + \theta _{\gamma} (s)
 )^2 + (mc^2)^2} ds  \\ & \leq  C_{R} + C \sup_{s > R} \left|  s^{1- \gamma}
 (|b'_j(s)|^2 + |b''_j(s) |) \right|
 \int _{-\infty }^{\infty}
 \frac{d \tau}{c^2 (\tau + \theta _{\gamma} (s))^2)  + (mc^2)^2 } d \tau  
\end{align*}
hold, where $\theta _{\gamma} (s) \equiv 0$ for $\gamma < 1$. By dividing the limits of integration into two regions, $|\tau| \leq
 2 |\theta _{\gamma } (s)| \leq C
 $ and $|\tau | \geq 2 |\theta _{\gamma} (s)|$,  notice that the last term
 of the above inequality is smaller
 than 
\begin{align*} 
C \sup_{s > R} \left|  s^{1- \gamma}
 ( |b'_j(s)|^2 + |b''_j(s) |  ) \right| \left( \int_{|\tau| \leq C} d \tau 
  +  \int_{|\tau| \geq 2|\theta _{\gamma} (s)|}
 \frac{d \tau}{c^2 \tau ^2 /4 + (mc^2)^2}  \right)
\leq C,
\end{align*}
where \eqref{Mo1} is utilized.  \\ 
Next, assume $b(t) = (0,...,0,b_j(t),0...,0)$ and $b_j(t)$
 can be written as 
\begin{align*}
 b_j(t) = e_3 (\log (1 + e_4 |t|)), 
\end{align*}
where $e _3 \neq 0$ and $e_4 > 0$ are constants. By the same
approach as \eqref{Mo2},we obtain the left-hand side of \eqref{14} for
this particular $b(t)$. Moreover, by using the fact that $(b'_j (s)) ^2$ and $b''_j(s)$ are
integrable on $[R, \infty)$, the right-hand side of
\eqref{14} can also be obtained for this $b(t)$.

\begin{remark}
Suppose $b(t)$ satisfies $b(t) = (0,...,0,b_{j1}(t),0,...,0,b_{j2}
 (t),0,...,0 )$ and $b_{j1} (t)$ and $b_{j2} (t)$ are written in the
 same form as \eqref{Mo1} by replacing $\gamma \to \gamma _1$ and
 $\gamma \to \gamma _2$, respectively.
 Then it is sufficient to consider the same approach as above for
 the maximum of $\{ \gamma _1 , \gamma _2\}$; indeed, suppose $\gamma
 _1 \geq \gamma _2$. Noting  that
\begin{align*}
\int_{| a + b(s)| \leq 2E_{0,0}/(mc^2)} |b'(s)| ds \leq C_R +
C \int_{ \tiny\begin{array}{ll}
|a_{j1} + b_{j1}(s)|  \leq 2E_{0,0}/(mc^2) \\   \qquad \qquad \qquad s \geq R 
\end{array}} |b_{j1}'(s)| ds 
\end{align*} 
and 
\begin{align*}
\int_R^t \frac{|b'(s)|^2 + |b''(s)|}{Q(s, a)^2} ds \leq 
C \int_{R}^{t} \frac{|b'_{j1} (s)|^2 + |b''_{j1} (s)|}{c^2(a_{j1} +
 b_{j1} (s))^2 + (mc^2)^2} ds, 
\end{align*}
it is straightforward to  prove that \eqref{14} mimics the above approach. Similarly, we consider the case when $b(t) = (b_1 (t),
 ...., b_n(t))$. However, if AC electric
 fields are included in $E(t)$, \eqref{14} is difficult to prove. For example, consider the case when $b_{j1}(t) = t^{ \gamma}$ and $b_{j2} (t) = t^{
 \gamma /2} + \cos t$ with $0< \gamma <1$, i.e., $| b_{j1}(t) | \geq
 |b_{j2} (t)|$ holds for $t \gg 1$, but $|b^{(1)}_{j1} (t)|
 \geq |b^{(l)}_{j2} (t)|$, $l \in \{1,2\}$, is not always true. Clearly,
 $s^{1- \gamma} (|b''(s)| + |b'(s)|)$ is not bounded; hence, our proof
 fails. Other approaches must be established 
 to consider more general electric fields including AC electric
 fields.
\end{remark}


\begin{thebibliography}{99}


\bibitem{AI} Adachi, T., Ishida, A.: Scattering in an external electric field asymptotically constant in time. J. Math. Phys. \textbf{52}, (2011), 
162101.


\bibitem{AH} Avron, J. E., Herbst, I. W.: Spectral and scattering theory
	of Schr\"{o}dinger operators related to the Stark effect,
	Comm. Math. Phys. \textbf{52}, (1977), 239-254.

\bibitem{BR1} B\"{o}hme, C., Reissig, M.: A scale-invariant Klein-Gordon
	model with time-dependent potential, Ann. Del. Univ. Di Ferrara,
	\textbf{58}, (2012), 229-250. 

\bibitem{BR2} B\"{o}hme, C., Reissig, M.: Energy bounds for Klein-Gordon
	equations with time-dependent potential, Ann. Del. Univ. Di Ferrara,
	\textbf{59}, (2013), 31-55.

\bibitem{CFKS} Cycon, H. L., Froese, R. G., Kirsch, W., Simon, B.:
	Schr\"{o}dinger Operators with Application to Quantum Mechanics
	and Global Geometry, Texts and Monographs in Physics, Springer
	Study Edition, Springer-Verlag,Berlinhme,(1987).

\bibitem{CN} Carles, R., Nakamura, Y.: Nonlinear Schr\"{o}dinger
        equations with Stark potential, Hokkaido Math. J. \textbf{3}, (2004), 719-729.


\bibitem{ERZ} Eliezer, S., Raicher, E., Zigler, A.: A novel solution to
	the Klein-Gordon equation in the presence of a strong rotating
	electric field, Physics Letters B 750 (2015), 76-81.
 

\bibitem{Ho1} Hochstadt, H.: Function theoretic properties of the
        discriminant of 
        Hill's equation, Math. Z. \textbf{82}, (1963), 237-242. 

\bibitem{Ho2} Hochstadt, H.: On the determination of a Hill's
        equation from its spectrum, Arch. for Rat. Mec. and
        Anal. \textbf{19}, (1965), 353-362. 




\bibitem{Mo} M\o ller, J. S.: Two-body short-range systems in a
         time-periodic electric field, Duke Math. J. \textbf{105}, (2000),
         135-166.

\bibitem{Na} Najman, B.: Solution of a differential equation in a scale of space, Glasnik Mat., \textbf{14}(34), (1979), 119-127. 

\bibitem{NN} Narozhnyi, N. B., Nikishov, A. I.: Solutions of the
        Klein-Gordon and Dirac equations for a particle in a
        constant electric field and a plane electromagnetic wave
        propagation along the field, Theo. and
        Math. Phys. \textbf{26}, (1976), 9-20.   

\bibitem{Ta} Tanji, N.: Dynamical view of pair creation in uniform
        electric and magnetic fields, Ann. Phys. \textbf{324}, (2009),
        1691-1736.

\bibitem{TY} Todorova G., Yordanov, B.: Weighted $L^2$-estimates for
	dissipative wave equations with variable coefficients, J. Diff. Eqn.,
	\textbf{246}, (2009), 4497-4518.

\bibitem{Ve} Veseli\'{c}, K.: A spectral theory of the Klein-Gordon
        equation involving a homogeneous electric field, J. Operator
        Theory, \textbf{25}, (1991), 319-330.

\bibitem{W1} Wirth, J.: Wave equations with time-dependent dissipation
	I. Non-effective dissipation, J. Diff. Eqn., \textbf{222}, (2006), 487-514.
	
\bibitem{W2} Wirth, J.: Wave equations with time-dependent dissipation
	II. Effective dissipation, J. Diff. Eqn., \textbf{232}, (2007), 74-103.

 


\end{thebibliography}
\end{document}